\newtheorem{theorem}{Theorem}
\newtheorem{definition}{Definition}
\tikzset{
  >={Triangle[length=0pt 3*8,width=0pt 8]},
}
\newcommand\Umbruch[2][4cm]{\begin{varwidth}{#1}\centering#2   \end{varwidth}}
\newcommand{\tr}{{\rm Tr}}
\newcommand{\be}{\begin{equation}}
\newcommand{\ee}{\end{equation}}
\newcommand{\bea}{\begin{eqnarray}}
\newcommand{\eea}{\end{eqnarray}}
\newcommand{\bit}{\begin{itemize}}
\newcommand{\eit}{\end{itemize}}
\renewcommand{\d}{\delta}
\title{Group Field Theory and Holographic Tensor Networks: Dynamical Corrections to the Ryu-Takayanagi formula}
\author[b]{Goffredo Chirco,}
\author[a]{Alex Goe{\ss}mann,}
\author[b,c]{Daniele Oriti}
\author[b]{and Mingyi Zhang}
\affiliation[a]{Institut f\"ur Mathematik, Technische Universit\"at Berlin, \\
Stra{\ss}e des 17.Juni 135, 10623 Berlin, Germany}
\affiliation[b]{Max-Planck-Institut f\"ur Gravitationsphysik (Albert-Einstein-Institut),\\Am M\"uhlenberg 1, 14476 Golm, Germany}
\affiliation[c]{Arnold-Sommerfeld-Center for Theoretical Physics, Ludwig-Maximilians-Universit\"at, \\ Theresienstrasse 37, 80333 M\"unchen, Germany.}
\emailAdd{goffredo.chirco(AT)aei.mpg.de}
\emailAdd{goessmann(AT)tu-berlin.de}
\emailAdd{daniele.oriti(AT)aei.mpg.de}
\emailAdd{mingyi.zhang(AT)aei.mpg.de}
\abstract{We introduce group field theory networks as a generalization of spin networks and of (symmetric) random tensor networks and provide a statistical computation of the R\'enyi entropy for a bipartite network state using the partition function of a simple interacting group field theory. The expectation value of the entanglement entropy is calculated by an expansion into stranded Feynman graphs and is shown to be captured by a Ryu-Takayanagi formula. 
For a simple interacting group field theory, we can prove the linear corrections, given by a polynomial perturbation of the Gaussian measure, to be negligible for a broad
class of networks.}
\begin{document} 
\maketitle
\flushbottom

\section{Introduction}

Tensor networks algorithms from condensed matter theory  \cite{11Oru, 12Bri, Cirac2009, Vidal2008, Verstraete2008} have recently experienced a massive impact in quantum gravity as new powerful tools for investigating the nature of spacetime at the Planck scale and its holographic properties. In the AdS/CFT framework, the Ryu-Takayanagi formula, together with the geometry/entanglement correspondence \cite{Ryu2006,VanRaamsdonk:2010pw,Faulkner2014,SwingleUniversality} have led to a new constructive approach to holographic duality, today further captured by the AdS/MERA conjecture \cite{SwingleEntanglement}, suggesting an interpretation of the geometry of the auxiliary tensor network decomposition of the quantum many-body boundary state as a representation of the dual spatial geometry. The use of tensor networks in this sense has produced a new constructive approach \cite{Hayden}, where the key entanglement features of some holographic theory can be captured by classes of tensor network states.  

In non-perturbative approaches to quantum gravity, including Loop Quantum Gravity (LQG) and Spin Foam Models \cite{Rovelli,Thiemann,01Ash,02Per} and their generalization in terms of Group Field Theories (GFT) \cite{03Ori,05Ori,06Gur}, pre-geometric quantum degrees of freedom are encoded in \emph{random combinatorial spin-network structures}, labeled by irreducible representation of $SU(2)$ and endowed with a gauge symmetry at each node. Such spin-network states can be understood as peculiar symmetric tensor networks \cite{Singh,Li2018}, and tensor network techniques have found a number of quantum gravity applications \cite{Dittrich2016,Delcamp2017,Dittrich2012,DSS2016}. A discrete spacetime and geometry is naturally associated with such structures, at a semi-classical level, and their quantum dynamics is related to (non-commutative) discrete gravity path integrals \cite{Baratin2010,Baratin2012,Han2013,HM2013}. The outstanding issue is then to show the emergence of {\it continuum} spacetime geometry and GR dynamics from the full quantum dynamics of the same pre-geometric degrees of freedom, which in fact describe a quantum spacetime as a peculiar sort of quantum many-body system \cite{07Ori,09Cao,Oriti2007}. In this sense, tensor network techniques have been largely exploited in relation to the problem of spin foam renormalization in the context of Loop Quantum Gravity \cite{Dittrich2016,Delcamp2017,Dittrich2012,DSS2016}, as well as quantitative tools to analyse the entanglement structure of spin-networks and look for classes of spin-network states with correlation and entanglement properties compatible with well behaved geometries in the semiclassical interpretation. 

More recently, tensor network representation schemes have been exploited to extract information on the non-local entanglement structure of the spin-network states and to understand the effects of the local gauge structure on the universal scaling properties of the holographic entanglement, in the background independent context \cite{Han}. Along this line, a precise dictionary between \emph{random} tensor networks and group field theory (GFT) states was defined by some of the authors in \cite{13Chi}, and used as a basis for a first derivation of the Ryu-Takayanagi formula \cite{Ryu2006} in a non-perturbative quantum gravity context. This dictionary also implied, under different restrictions on the GFT states, a correspondence between LQG spin-network states and tensor networks, and a correspondence between random tensors models \cite{RTM} and tensor networks. 

The generalized tensor field-theoretic formalism of GFTs has been successfully used to recover the statistical behavior of a special class of \emph{random} tensor network (RTN) states, in the large dimensional regime \cite{Hayden}. Idealized versions of RTNs, so-called pluri-perfect tensors, recently attracted a great deal of attention in the holographic context, as they can simultaneously satisfy the Ryu-Takayanagi (RT) formula for a subset of boundary states \cite{Pastawski}, they can be used to define bidirectional holographic codes \cite{YangHayden}, error correction properties of bulk local operators \cite{Almheiri2015}, and to investigate sub-AdS locality. In this sense, a fully developed dictionary hold strong promise for a new interplay between non-perturbative quantum gravity, AdS/CFT and quantum information theory. 

Conversely, the provided correspondence between group field theory (GFT) many-body states and large dimensional random tensor network, has allowed to reproduce standard techniques of random state averaging in the non-perturbative quantum gravity context, by means of a mapping to the evaluation of the partition function of a simple group field theory, understood as a peculiar statistical quantum many-body system. In the preliminary study provided in \cite{13Chi}, this analysis was limited to the case of a non-interacting GFT model. For a given GFT model, interaction kernels give the vertices of the spin foam, which can be naively seen as nodes of the bulk tensor network corresponding with the Feynman diagrams of the theory. In this sense, it is natural to wonder whether the presence of ``bulk'' interaction may leave an imprint on the expression of the holographic area law, in relation to deformation of the minimal surface.

In this paper, some of the results presented in \cite{13Chi} are re-derived and presented from a different perspective, hoping that this will facilitate their comprehension from a deeper and different angle. We then perform the calculation of the entanglement entropy for the simple interacting group field theory, corresponding to Boulatov's model for topological $3d$ gravity \cite{Boulatov, Ooguri}. From the statistical-mechanics point of view, the interacting model realizes a general non-Gaussian probability distribution over random tensor networks, which can be further exploited to characterize deviations from the perfect tensor behavior. Section 2 introduces the statistical treatment of the group field theory field, i.e. the basic dynamical field of a given GFT model, focussing on the relation, in terms of entanglement, between tensor fields and GFT fields. In Section 3 the notion of entanglement entropy for a bipartite network state is quantified by the measure of R\`enyi entropy and it is shown how, via replica trick, this quantity can be computed in terms of two auxiliary variables, $Z_0^{(N)}$ and $Z_A^{(N)}$, respectively defining the $N$th-power of the whole and reduced tensor network density matrix partition functions, graphically described in terms of stranded contraction patterns.  

Building on the group field-theoretic description, we then consider a statistical evaluation of the entanglement entropy, which, in the assumption of large dimension of the tensorial leg space $\mathcal{H}$, reduces by the concentration of measure phenomenon to the computation of a ratio of partition functions expectation values. The quantities $Z_0^{(N)}$ and $Z_A^{(N)}$ can be expanded in Feynman amplitudes corresponding to stranded diagrams. Their asymptotic behavior for large $\text{dim}[\mathcal{H}]$ is, however, captured only by the diagrams with maximal divergence degree. In Section 4 we define strategies to identify such maximal divergent diagrams, where we derive bounds of the divergence degree by the topology of the Feynman graphs. We study the maximum face number for different types of networks in a class of diagrams that we call \emph{locally averaged diagrams}. The maximally divergent diagrams in this class are analysed by first restricting to the free sector of the theory. In this setting, for the case of a unique minimal surface separating two boundary regions the mamxima are unique. Maximal divergent patterns for the case of multiple minimal surfaces are further discussed in Appendix A. We therefore consider the effect of a specific class of interactions in Section 5, by looking at a minimal perturbation of the free case given by a single interaction process. In this setting, we determine the maximal face numbers for the case of interaction kernels involving nodes incident to minimal surfaces. We find that the divergence degree of the diagrams induced by single local interaction processes is lower than in the free theory case, and we can find maximal divergent pattern only for a set of situations, where the network graph can be coarse-grained to a nontrivial tree structure.

\section{Random Group Fields and Tensor Networks}
\label{SECRandomTN}
Group field theories (GFTs) are quantum field theories defined on product spaces of groups, defined by combinatorially non-local kernels \cite{OriApproach, 03Ori, 05Ori}. GFTs provide a higher-rank generalization of matrix models with quantum states given by regular d-valent, graphs\footnote{including disconnected ones} labeled by group or Lie algebra elements, which can be equivalently represented as $(d-1)$-cellular complexes. The \emph{quantum dynamics} is defined by a vacuum partition function, whose perturbative expansion gives a sum of Feynman diagrams dual to $d$-cellular complexes of arbitrary topology. The Feynman amplitudes for these discrete histories can be written either as spin foam models or as simplicial gravity path integrals \cite{02Per,Ambjorn2005}. 
In the forthcoming derivation of the RT formula, we will consider GFTs defined in terms of a
(complex) bosonic field $\phi(g_1,\dots, g_d)$ on $G^{\times d}/G$, specifying to the case  $d=3$ and $G=SU(2)$. 

As a first step, we shall introduce group fields as generalizations of tensors, with the aim of strengthening the concept of entanglement as a unifying construction principle for both. We then introduce a generalization of random tensors in terms of GFT fields and derive group field network states as random variables dependent on field configurations.

\subsection{From Tensors to Group Fields}
A rank-$d$ tensor is an array of $N^d$ complex numbers\footnote{These numbers can be understood as coefficients in a basis decomposition, where for each rank a basis of cardinality $N$ is chosen.}, which is modeled by a field on the $d$th cartesian product of an index set with cardinality $N$. Each such index set can be enriched to represent the cyclic group $\mathbb{Z}_N=\{\ket{1},...,\ket{N}\}$, with the group relation induced by:
\begin{align}
\ket{k}\circ\ket{l}:=\ket{(k+l)\mod\,N}\quad \forall \ket{k},\ket{l}\in \mathbb{Z}_N
\end{align}  
Following this intuition, we define a rank-$d$ tensor as a field on a product group and generalize it afterwards to the case of more general group fields. 
\begin{definition}
	A rank-$d$ tensor $\ket{T}$ with index cardinality $N$ is a complex field on $d$ copies of the cyclic group $\mathbb{Z}_N$:
	\label{DEFtensor}
	\begin{align}
	\ket{T}:\mathbb{Z}_N^{\times d}\rightarrow \mathbb{C} \notag
	\end{align}
\end{definition}
Let $\mathcal{H}_{d,N}$ be the space of tensors with fixed rank $d$ and index cardinality $N$. Neglecting the structure of the cyclic group, $\mathcal{H}_{d,N}$ is reduced to $\mathbb{C}^{N^d}$. The linear structure, the scalar product and the completeness of $\mathbb{C}^{N^d}$ establish $\mathcal{H}_{d,N}$ to be a Hilbert space. A basis of $\mathcal{H}_{d,N}$ is chosen by $\ket{i_1,...,i_d}$, defined as:
\begin{align}
\label{DEFtensorbase}
\ket{i_1,...,i_d}(j_1\times...\times j_d)= \delta_{i_1,j_1}\cdot...\cdot\delta_{i_d,j_d}
\end{align}
With respect to this basis, we decompose a tensor $\ket{T}$ into its components $T_{i_1...i_d}$, which introduces an isomorphism to $\mathbb{C}^{N^d}$:
\begin{align}
\ket{T}=:\sum_{i_1,...,i_d \in \mathbb{Z}_N}T_{i_1...i_d}\ket{i_1,...,i_d}
\label{tensortoarray}
\end{align}
By decomposition of the basis elements in (\ref{DEFtensorbase}), we factorize the Hilbert space $\mathcal{H}_{d,N}$ in $d$ spaces $\mathcal{H}^{(l)}_{1,N}$, which we refer to as \textit{leg spaces} :
\begin{align}
\label{DEFtensordec}
\mathcal{H}_{d,N}=\bigotimes\limits_{l=1,..,d} \mathcal{H}^{(l)}_{1,N}\quad,\quad \ket{i_1,...,i_d}=\ket{i_1}\times...\times \ket{i_d}
\end{align}
Each leg space $\mathcal{H}^{(l)}_{1,N}$ has an induced Hilbert structure and a basis $\{\ket{i_l}\}$, whose product forms the basis elements (\ref{DEFtensorbase}) of the tensor space $\mathcal{H}_{1,N}$. 
A generic tensor $\ket{T}$ is given by an arbitrary superposition of the chosen basis elements and generally does not decompose as a product state in the leg spaces. In this case, we say that $\ket{T}$ is \textit{entangled} with respect to the given Hilbert space factorization. Identifying $\mathcal{H}_{d,N}$ with the space $\mathbb{C}^{N^d}$, this amounts to an impossible decomposition of an array $T_{i_1...i_d}$ into a product of one-dimensional arrays $\prod_l T^{(l)}_{i_l}$. 
To visualise this behavior, we model the state $\ket{T}$ in figure \ref{FIGtensor} as a node with associated $d$ ordered open legs $l$ representing the Hilbert spaces $\mathcal{H}_{1,N}^{(l)}$. 
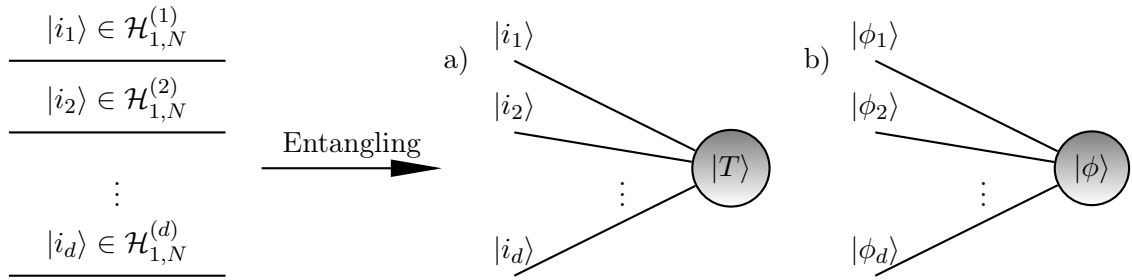
\begin{figure}[t]
	\centering
\begin{tikzpicture}[thick,scale=0.95]
\node (V) at (3.2,1.5) {a)};
\node (V) at (8.2,1.5) {b)};
\draw[-] (-3,1.5) -- (0,1.5) node[midway,above]{$\ket{i_1}\in \mathcal{H}_{1,N}^{(1)}$};
\draw[-] (-3,0.5) -- (0,0.5) node[midway,above]{$\ket{i_2}\in \mathcal{H}_{1,N}^{(2)}$} ;
\node (A) at (-1.5,-0.25)  {$\vdots$};
\draw[-] (-3,-1.5) -- (0,-1.5) node[midway,above]{$\ket{i_d}\in \mathcal{H}_{1,N}^{(d)}$} ;
\draw[->] (0.5,0) -- (3,0) node[midway,above]{Entangling} ;
\node (A) at (7,0) [circle,shade,draw] {$\ket{T}$};
\draw[-] (A) -- (4,1.5) node[above]{$\ket{i_1}$} ;
\draw[-] (A) -- (4,0.5) node[above]{$\ket{i_2}$} ;
\node (b) at (5.5,-0.25)  {$\vdots$};
\draw[-] (A) -- (4,-1.5) node[above]{$\ket{i_d}$};

\node (A) at (12,0) [circle,shade,draw] {$\ket{\phi}$};
\draw[-] (A) -- (9,1.5) node[above]{$\ket{\phi_1}$} ;
\draw[-] (A) -- (9,0.5) node[above]{$\ket{\phi_2}$} ;
\node (b) at (10.5,-0.25)  {$\vdots$};
\draw[-] (A) -- (9,-1.5) node[above]{$\ket{\phi_d}$};
\end{tikzpicture}
	\caption{Superposition of $d$ leg space product states $\ket{i_1}\times...\times \ket{i_d}$ with the weights by $T_{i_1...i_d}$ results in the tensor state $\ket{T}$. The nonentangled product states are visualized by nonconnected legs, in contary to the entangled node states a) $\ket{T}$ and more general group fields b) $\ket{\phi}$ as one connected component.}
	\label{FIGtensor}
\end{figure}

Moving one step further, one can consider the extension of the index set from a discrete cyclic group to a general locally compact group $G$. Although this results in dealing with infinite dimensional leg spaces $\mathcal{H}$, we can use the left Haar measure $\mu$ of $G$ \footnote{Since $G$ is locally compact, there exists a left-invariant Haar measure \cite{Werner}.} to lift the decompositions (\ref{tensortoarray}) to more general integrals.

\begin{definition}
	\label{DEFgroupfields}
	Let $G$ be a locally compact group with the normed left Haar measure $\mu$ of its elements. A \textbf{rank-$d$ group field} $\phi$ is a $\mu^{\times d}$-integrable field over $d$ copies of the group $G$:
	\begin{align}
	\phi:G^{\times d}\rightarrow \mathbb{C}\quad,\quad(g_1\times...\times g_d)\rightarrow \phi(g_1,...,g_d)\in \mathbb{C}\\
	\text{ such that } ||\phi||^2:=\int_{G^{\times d}} \phi(g_1,..,g_d)\overline{\phi(g_1,..,g_d)} d\mu^{\times d}(g_1,...,g_d) < \infty
	\end{align}
\end{definition}  

If we take $G$ as the cyclic group $\mathbb{Z}_N$, the notion of group fields reduces to tensors, where integrability with respect to the discrete Dirac measure $\mu$ is satisfied for all fields considered in definition \ref{DEFtensor}. Analogously to the space $\mathcal{H}_{d,N}$ of tensors, the space of integrable group fields carries a Hilbert structure induced by the space $L^2(G^{\times d},\mu^{\times d})$ of $l_2$-normed fields. The demand of integrability for $\phi$ ensures the finiteness of the inner product on $L^2(G^{\times d} ,\mu^{\times d})$ by the Cauchy-Schwarz inequality:
\begin{align}
\braket{\phi | \psi}:=\int_{G^{\times d}}\phi(g_1,...,g_d) \overline{\psi(g_1,...,g_d)} d\mu^{\times d}(g_1,...,g_d) \leq ||\phi||\cdot ||\psi|||<\infty
\end{align}

A group field $\phi$ is therefore understood as a state $\ket{\phi}$ in the infinite dimensional Hilbert space $L^2(G^{\times d},\mu^{\times d})$. Dirac's delta symbols $\bra{g_1,...,g_d}$ in the associated dual space, which we used to define a basis decomposition (\ref{DEFtensorbase}) in case of tensors, do not correspond to elements in $L^2(G^{\times d},\mu^{\times d})$. For the following we understand them as distributions acting on the group field space $L^2(G^{\times d},\mu^{\times d})$:
\begin{align}
\label{DEFgroupbase}
\bra{g_1,...,g_d}\in \left((L^2(G^{\times d},\mu^{\times d})\right)^{*}\quad ,\quad
\braket{g_1,...,g_d|\phi}=\phi(g_1,...,g_d)
\end{align} 
Also the Hilbert space $L^2(G^{\times d},\mu^{\times d})$ admits by construction a factorization into leg Hilbert spaces $L^2(G,\mu)^{(l)}$, which correspond to a product decomposition of the distributions (\ref{DEFgroupbase}) into distributions $\bra{g_l}$ affecting single variables:
\begin{align}
L^2(G^{\times d},\mu^{\times d})=\bigotimes_{l=1,...,d}L^2(G,\mu)^{(l)}\quad,\quad \bra{g_1,...,g_d}=\bra{g_1}\times...\times \bra{g_d}
\end{align}
With the established factorization of the group field Hilbert space, we are now able to generalize the notion of entanglement with the use of Dirac distributions instead of basis decompositions:

\begin{definition}
A group field theory state $\ket{\phi}\in L^2(G^{\times d},\mu^{\times d})$ is called \textbf{unentangled} with respect to the factorization $L^2(G^{\times d},\mu^{\times d})=\bigotimes_{l=1,...,d}L^2(G,\mu)^{(l)}$, if there is a collection of group fields $\{\ket{\phi_{l}}\in L^2(G,\mu)\}_{l=1}^d$ such that $\ket{\phi}=\ket{\phi_1}\times...\times\ket{\phi_d}$, i.e. it holds:
\begin{align}
\braket{g_1,...,g_d|\phi}=\prod_{l=1}^d \braket{g_l|\phi_{l}}\quad \forall g_1\times...\times g_d\in G^{\times d}
\label{DEFunent}
\end{align}
If there is no such collection, the field $\ket{\phi}$ is called \textbf{entangled}.
\end{definition}

Entanglement with respect to the decomposition of group field spaces into a collection of field spaces on smaller product groups is thus an analogous concept to the case of tensors on discrete groups, with the difference just lying in a more general perspective by distributions compared to basis decompositions. For the sake of simplicity we will treat the Dirac distributions $\bra{g_1,...,g_d}$ in the following as elements $\ket{g_1,...,g_d}$ of the space  $L^2(G^{\times d},\mu^{\times d})$, thus their evaluations (\ref{DEFgroupbase}) are understood as a proper scalar product.

\subsection{Free Tensor Models}
\label{SECgftqft}
The central idea of a tensor field theory is to implement random distributions on field spaces, which amounts in the case of the group field space $L^2(G^{\times d},\mu^{\times d})$ to a probability measure with density $d\nu(\phi)$ \cite{06Gur} \footnote{The definition of $d\nu(\phi)$ is usually given by a field measure at each element of $X$ and gets therefore difficult if the cardinality of the set $X$ is not finite \cite{06Gur}. The reason for this lies in the structure of the space $F(X)$, the space of complex functions on $X$. If and only if $|X|<\infty$, the dimension of $F(X)$ is finite and we can construct measures using the Lebesgue measure. For infinite cardinality of $X$, this is no longer possible, and the use of other methods like limit processes of measures is required \cite{Zee}.}. Measurable observations of a quantum field theory correspond to random variables, so-called observables, $O(\phi)$ with a probability character induced by the field measure $\nu$.

Well controllable field theories are Gaussian probability measures, which correspond to \emph{free field theories}. For the sake of simplicity, let us introduce Gaussian probability measures on the space of tensors $\mathcal{H}_{d,N}$, since we can exploit basis decompostions in this case. Associated with each Gaussian probability measure is a \emph{covariance} $\mathcal{C}$, which is an endomorphism in the space of field configurations. In the case of the finite dimensional configuration space $\mathcal{H}_{d,N}$, $\mathcal{C}$ is described by a matrix $\mathcal{C}(i,j)$:
\begin{align}
\mathcal{C}\in L(\mathcal{H}_{d,N},\mathcal{H}_{d,N}) \quad ,\quad T_{i_1...i_d}\rightarrow \sum_{j_1,...,j_d} \mathcal{C}(i_1,...,i_d,j_1,...,j_d)T_{j_1...j_d}
\end{align}

\begin{definition}
	\label{DEFfreetensormodel}
	Let $\mu_{\mathcal{C}}(T)$ be a probability measure of fields $T$ on $X=\mathbb{Z}_N^{\times d}$. $\mu_{\mathcal{C}}(T)$ is called \textbf{free Tensor model} with covariance $\mathcal{C}$, if the only nonvanishing expectation of observables are given by linear combinations of so-called $2p$-point Green functions:
	\begin{align}
	\braket{\prod\limits_{k=1}^p T_{i^{(k)}_1...i^{(k)}_d}\overline{T}_{j^{(k)}_1...j^{(k)}_d}}=\sum\limits_{\pi \in \mathcal{S}_N} \prod\limits_{k=1}^p \mathcal{C}(i^{(k)}_1,...,i^{(k)}_d,j^{(\pi(k))}_1,...,j^{(\pi(k))}_d)
	\label{permutationdecompositon}
	\end{align}
\end{definition}
The $2p$-point Green functions have a direct physical interpretation in terms of particle propagation. The $p$ fields $T_{i^{(k)}_1...i^{(k)}_d}$ located at their arguments $(i^{(k)}_1,...,i^{(k)}_d)\in \mathbb{Z}_N^{\times d}$ represent the incoming particles, which propagate to the outgoing particles $\overline{T}_{j^{(k)}_1...j^{(k)}_d}$ located at $(j^{(k)}_1,...,j^{(k)}_d)$. Each permutation $\pi \in S_N$ encodes the propagation of the $k$th incoming particle to the $\pi(k)$th outgoing particle, where the sum over all propagation possibilities is taken. A typical choice for the covariance, determining the structure of particle propagation, is given by the identification of each field argument:
\begin{align}
\label{deltakernel}
\mathcal{C}(i_1,...,i_d,j_1,...,j_d):=\prod_{l=1}^{d}\delta(i_l,j_l)
\end{align} 
Let us now assume the existence of a to $\mathcal{C}$ inverse covariance, that is a covariance $\mathcal{K}$ such that:
\begin{align}
\label{inversekernel}
\sum_{h_1,...,h_d=1}^{N} \mathcal{C}(i_1,...,i_d,h_1,...,h_d)\mathcal{K}(h_1,...,h_d,j_1,...,j_d)=\prod_{l=1}^{d}\delta(i_l,j_l)
\end{align}
This is for instance the case for the choice (\ref{deltakernel}). Using the inverse covariance $\mathcal{K}$, one can express the Gaussian Tensor Model by the Lebesgue measure $dT_{i_1...i_d}$ of the field value at each element of $\mathbb{Z}_N^{\times d}$ \cite{06Gur}. 	
\begin{theorem}\label{THEwicktensor}
	Given a by $\mathcal{K}$ invertible covariance $\mathcal{C}$, the associated Gaussian Tensor Model is given by the probability density:
	\begin{align}
	d\mu_\mathcal{C}(T)=\det(\mathcal{C}) \left[ \prod\limits_{\{h_l\}}^N \frac{dT(\{h_l\})d\overline{T}(\{h_l\})}{2\pi}  \right] e^{-\sum_{\{i_l\},\{j_l\}}T(\{i_l\})\mathcal{K}(\{i_l\},\{j_l\})\overline{T}(\{j_l\})}\\ \nonumber \quad \text{(Wick)}
	\end{align}
\end{theorem}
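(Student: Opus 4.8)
The plan is to recognize the right-hand side as an ordinary finite-dimensional complex Gaussian integral and to prove that it reproduces the moment structure (\ref{permutationdecompositon}) that \emph{defines} the free tensor model in Definition \ref{DEFfreetensormodel}. Since $\mathcal{H}_{d,N}\cong\mathbb{C}^{N^d}$, the first step is to collapse each multi-index $\{i_l\}$ into a single label $a\in\{1,\dots,M\}$ with $M=N^d$, so that $\mathcal{C}$ and $\mathcal{K}$ become $M\times M$ matrices and the inversion relation (\ref{inversekernel}) reads simply $\mathcal{K}=\mathcal{C}^{-1}$. For the stated density to define a genuine probability measure one must assume $\mathcal{K}$ (equivalently $\mathcal{C}$) to be Hermitian and positive definite; this is implicit in the statement and is exactly what makes the Hermitian form $\sum_{a,b}T_a\mathcal{K}_{ab}\overline{T}_b$ real and positive and the integral convergent. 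Two things then have to be checked: that the density integrates to one, and that its $2p$-point functions agree with (\ref{permutationdecompositon}) while all moments with unequal numbers of $T$ and $\overline{T}$ vanish.

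For the normalization I would diagonalize $\mathcal{K}$ by a unitary transformation of the integration variables, under which the flat measure $\prod_a dT_a\,d\overline{T}_a/(2\pi)$ is invariant; the integral factorizes into $M$ one-dimensional complex Gaussians $\int \frac{dz\,d\overline{z}}{2\pi}\,e^{-\lambda_a|z|^2}=\lambda_a^{-1}$, whose product is $\prod_a\lambda_a^{-1}=\det(\mathcal{K})^{-1}=\det(\mathcal{C})$, so that the overall prefactor is fixed by the requirement that the total mass equal one. The moments are then most efficiently extracted from the generating functional: adding linear source terms and completing the square by the shift $T\to T+\mathcal{C}J$ gives $Z[J,\overline{J}]=\exp\!\big(\sum_{a,b}\overline{J}_a\,\mathcal{C}_{ab}\,J_b\big)$. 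Differentiating $Z$ with respect to the sources and setting them to zero,
\begin{align}
\Big\langle \prod_{k=1}^p T_{a_k}\,\overline{T}_{b_k}\Big\rangle
=\prod_{k=1}^p\frac{\partial}{\partial \overline{J}_{a_k}}\frac{\partial}{\partial J_{b_k}}\,Z[J,\overline{J}]\Big|_{J=\overline{J}=0},
\notag
\end{align}
reproduces the required contraction structure.

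The key observation that closes the argument is that, because the fields are complex, each derivative $\partial/\partial \overline{J}_{a_k}$ must be paired with exactly one $\partial/\partial J_{b}$ descending from the quadratic exponent, so the surviving terms are labelled by \emph{bijections} between the $T$'s and the $\overline{T}$'s, i.e. by permutations $\pi$, each contributing $\prod_k \mathcal{C}(a_k,b_{\pi(k)})$; this is precisely (\ref{permutationdecompositon}), while the $U(1)$ phase invariance $T\to e^{i\theta}T$ of the density forces every moment with unequal numbers of $T$ and $\overline{T}$ to vanish. Agreement of all moments shows that the explicit density satisfies the defining property of Definition \ref{DEFfreetensormodel} with covariance $\mathcal{C}$, and moment determinacy of Gaussian measures on finite-dimensional spaces guarantees that this is the unique such measure, hence $\mu_{\mathcal{C}}$. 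I expect the only real subtlety to be the bookkeeping of the complex structure—ensuring that contractions pair $T$ with $\overline{T}$ only, so that one obtains a sum over permutations rather than over all pairings—together with making explicit the positive-definiteness hypothesis on $\mathcal{C}$ that guarantees existence of the measure; the determinant, the Gaussian integration, and the source-term manipulation are otherwise entirely standard.
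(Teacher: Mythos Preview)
The paper does not actually supply a proof of this theorem: it is stated as the standard Wick theorem for complex Gaussian integrals, tagged ``(Wick)'' and attributed to the literature (the reference \cite{06Gur} appears just before the statement). Your argument---collapsing the multi-index to $a\in\{1,\dots,M\}$, diagonalizing $\mathcal{K}$ to compute the normalization, introducing sources and completing the square to obtain $Z[J,\overline{J}]=\exp(\overline{J}\mathcal{C}J)$, and then differentiating to recover the permutation sum (\ref{permutationdecompositon})---is exactly the standard textbook derivation and is correct. Your explicit remark that positive-definiteness of $\mathcal{C}$ is an implicit hypothesis, and your care in explaining why only $T$--$\overline{T}$ pairings survive (yielding a sum over $\mathcal{S}_p$ rather than over all pairings), are both appropriate clarifications that the paper leaves unstated.
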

The exponential of the weight, which transforms the Lebesgue measures to the tensor model, is called the tensor model action $S[T]$. In the free Gaussian theory, the associated action $S_0[T]$ is called free:
\begin{align}
S_0[T]:=\sum_{\{i_l\},\{j_l\}}T(\{i_l\})\mathcal{K}(\{i_l\},\{j_l\})\overline{T}(\{j_l\})
\end{align}

\subsection{Free and perturbed Group Field Theory}
Allowing for groups with infinite cardinality, beyond the restriction to the cyclic group considered above, results in a significantly richer theory. The covariance $\mathcal{C}$ is generalized by its action on the Dirac distributions $\bra{g_1,...,g_d}$, which replaced in (\ref{DEFgroupbase}) the finite basis decomposition:
\begin{align}
\mathcal{C}:\bra{g_1,...,g_d} \rightarrow \int_{G^{\times d}}\prod_{i=1}^d d\overline{g}_i \mathcal{C}(g_1,...,g_d,\overline{g}_1,...,\overline{g}_d)\bra{\overline{g}_1,...,\overline{g}_d} \\
\mathcal{C}( \bra{g_1,...,g_d} )[\phi]= \int_{G^{\times d}}\prod_{i=1}^d d\overline{g}_i \mathcal{C}(g_1,...,g_d,\overline{g}_1,...,\overline{g}_d)\phi(\overline{g}_1,...,\overline{g}_d)
\end{align}
A Gaussian Group Field Theory with covariance $C$ is defined in analogy to definition \ref{DEFfreetensormodel} by replacing the cyclic group arguments $\{i_l,j_l\in \mathbb{Z}_N\}$  by general group arguments $\{g_l,h_l\in G\}$ and the tensors $T$ by general group fields $\phi$. The covariance $\mathcal{C}$ is invertible by the covariance $\mathcal{K}$, if $\mathcal{C} \circ \mathcal{K}=\text{Id}_{L^2(G^{\times d},G^{\times d})}$. In this case, one can associate a free action to the covariance $\mathcal{C}$, which, in analogy to the tensor model case, is given by 
\begin{align}
\label{DEFfreeactiongroupfields}
S_0[\phi]=\int_{G^{\times 2d}}[\prod_{l=1}^d d\mu(g_l)d\mu(h_l)]\, \phi(\{g_l\})\mathcal{K}(\{g_l\}\{h_l\})\overline{\phi}(\{h_l\})
\end{align}

Assuming the existence of a Gaussian measure $d\mu_{\mathcal{C}}(\phi)$ for the group field configurations $\phi$ associated with the covariance $\mathcal{C}$, we define the Lebesgue measure $[D\phi]$ in analogy to the tensorial case by
\begin{align}
d\mu_\mathcal{C}(\phi)=:[D\phi]e^{-S_0[\phi]} 
\label{DEFfreemeasure}
\end{align}
Pure Gaussian probability measures describe free field propagation, formalized by $2p$-point functions in definition \ref{DEFfreetensormodel}. In order to include field interactions, a perturbation of the free Gaussian measure is needed, which can be implemented by a control parameter $\lambda$: 
\begin{align}
\label{DEFperturbation}
S_0[\phi]\rightarrow S_0[\phi]+\lambda S_{\text{int}[\phi]},
\end{align}
The perturbed exponential weight of the probability distribution can be expanded in a series of manipulated free Gaussian probabilities, which can be interpreted as different orders of interactions:
\begin{align}
\label{intexpension}
[D\phi]e^{-S_0[\phi]-\lambda S_{\text{int}}[\phi]} & =[D\phi]e^{-S_0[\phi]}\cdot e^{-\lambda S_{\text{int}}[\phi]}\\ \nonumber
& =d\mu_C(\phi)\left[1-\lambda S_{\text{int}}[\phi]+\frac{\lambda^2}{2} (S_{\text{int}}[\phi])^2+\mathcal{O}(\lambda^3)\right]
\end{align}
The result is a d-dimensional combinatorially non-local quantum field theory living on a product group manifold \cite{03Ori}. By decomposition of observables $O(\phi)$ into $2p$-point functions and use of an analog version of Wicks theorem (theorem \ref{THEwicktensor}) expectation values $\braket{O}$ are expanded into series of terms with interpretation in terms of Feynman graphs. Due to the defining combinatorial structure by the action terms $S_0$ and $S_{\text{int}}$, the Feynman diagrams of the theory are dual to cellular complexes, and the perturbative expansion of the quantum dynamics defines a sum over random lattices of arbitrary topology. A similar lattice interpretation can be given to the quantum states of the theory. For group field theory models, where appropriate group theoretic data are used and specific properties are imposed on the states and quantum amplitudes, the same lattice structures can be understood in terms of simplicial geometries \cite{03Ori, BarOr2012, OriApproach, 13Chi}.

\subsection{Closure constraint for group fields}
\label{SECsymmetricgroupfields}

Group field theories provide a generalizing structure for non-perturbative approaches to Quantum Gravity \cite{03Ori,BarOr2012}. Their interpretation as simplicial quantum geometries, in terms of spin-networks \cite{13Chi}, relies on the restriction to \emph{gauge invariant states}, satisfying the so-called closure constraint
, given in generalization to arbitrary combinatorial dimension $d$ as:
\begin{align}
\label{DEFsymmetry}
\phi_0(hg_1,...,hg_d)\stackrel{!}{=}\phi_0(g_1,...,g_d) \quad \forall h\in G
\end{align}
For convenience in the forthcoming calculations, let us consider the imposition of such constraint for random group fields by splitting a general group field $\phi \in L^2(G^{\times d},\mu^{\times d})$ into a field $\phi_0$, where  the symmetry is realized by group averaging (\ref{DEFsymmetry}), and a gauge field $\chi$:
\begin{align}
\phi_0(g_1,...,g_d): & =\int_G \phi(hg_1,...,hg_d) d\mu(h) \\
\chi(g_1,...,g_d): & = \phi(g_1,...,g_d)-\phi_0(g_1,...,g_d)
\end{align}
The invariance of the Haar measure $\mu$ under shifts of the transformation variable $h$ ensures the symmetry (\ref{DEFsymmetry}) for the field $\phi_0$:
\begin{align}
\phi_0(\hat{h}g_1,...,\hat{h}g_d)& =\int_G \phi(h\hat{h}g_1,...,h\hat{h}g_d) d\mu(h)=\int_G \phi(hg_1,...,hg_d) d\mu(h\hat{h}^{-1}) \label{symmetryprojection}
\\
& =\phi_0(g_1,...,g_d)
\label{symmetrychecked}
\end{align}
We can then impose an equivalence relation $\sim $ in $L^2(G^{\times d},\mu^{\times d})$ and choose the symmetric elements $\phi_0$ as a representative of it:
\begin{align}
\nonumber
\phi^{(1)}\sim \phi^{(2)} & \iff \int_G \phi^{(1)}(hg_1,...,hg_d) d\mu(h)=\int_G \phi^{(2)}(hg_1,...,hg_d) d\mu(h)=:\phi_0\\
& \iff \phi^{(1)},\phi^{(2)}\in [\phi_0]
\label{DEFequivalenceclassphi}
\end{align}
The field $\chi$ it thereby redundant, in the sense that it labels the elements in the equivalence class $[\phi_0]$:
\begin{align}
\phi \in [\phi_0] \iff \exists \chi\in L^2(G^{\times d},\mu^{\times d}): \quad \phi=\phi_0+\chi \, \land \, \int_G \chi(hg_1,...,hg_d) d\mu(h)=0
\end{align}
We can represent each equivalence class $[\phi_0]$ by a $(d-1)$-dimensional group field. Therefore, we define an equivalence relation in $G^{\times d}$ to identify the relevant arguments of the representing field:
\begin{align}
\label{DEFequivalenceclassG}
 (g_1,...,g_d)\sim (\hat{g}_1,...,\hat{g}_d) & \iff \exists h\in G:\, (g_1,...,g_d)=(h\hat{g}_1,...,h\hat{g}_d)\\
 (H_1,...,H_{d-1},e)& :=(g_d^{-1}g_1,...,g_{d}^{-1}g_d)\in [(g_1,...,g_d)]
\end{align}
Since $\phi_0$ is constant for all elements in a class $[(g_1,...,g_d)]$ of arguments, it is already characterized by its values for the representative $(H_1,...,H_{d-1},e)$, which form an embedding of $G^{\times (d-1)}$ in $G^{\times d}$. We can thus think of $\phi_0$ as an $(d-1)$-field by restricting it to this embedding. Based on this observation, one can define $d$-dimensional Group Field Theories depending just on symmetric part $\phi_0$ of the described group fields $\phi$. We impose this dependence with the use of propagation and interaction kernels respecting the symmetries \cite{13Chi}:
\begin{align}
\label{DEFsymmetryK}
\mathcal{K}(\{g^{(1)}_i\}\{g^{(2)}_i\}) & =\int\limits_{G^{\times 2}}dh d\hat{h}\,\mathcal{K}(\{hg^{(1)}_i\}\{\hat{h}g^{(2)}_i\})\\
\label{DEFsymmetryV}
\mathcal{V}(\{g^{(1)}_i\}...\{g^{(d+1)}_i\}) & = \int\limits_{G^{\times (d+1)}}[\prod_{j=1}^{d+1}dh_j]\, \mathcal{V}(\{h_1g^{(1}_i\}...\{h_{d+1}g^{(d+1)}_i\})
\end{align}
The action constructed by convolution of group fields weighted by a kernel $\mathcal{K}$ satisfying (\ref{DEFsymmetryK}), and analogously for kernel $\mathcal{V}$ satisfying (\ref{DEFsymmetryV}), does just depend on the symmetric part $\phi_0$, since for arbitrary fields $\chi$ it holds:
\begin{align}
\nonumber
S_0[\phi_0+\chi]& = \int dg^{(1)}_i dg^{(2)}_i dh d\hat{h}\, \mathcal{K}(\{hg^{(1)}_i\}\{\hat{h}g^{(2)}_i\}) \phi(g^{(1)}_i)\overline{\phi}(g^{(2)}_i)\\
\nonumber
& = \int d(hg^{(1)}_i) d(\hat{h}g^{(2)}_i)\, \mathcal{K}(\{hg^{(1)}_i\}\{\hat{h}g^{(2)}_i\}) \int_G dh \phi(hg^{(1)}_i) \int_G d\hat{h} \phi(\hat{h}g^{(2)}_i)\\
& = S_0[\phi_0]
\end{align}
If the propagation kernel $\mathcal{K}$ has the demanded symmetry, it is not invertible. As a direct consequence, there is no Gaussian probability measure on the space of in general $d$-dimensional group fields, such that $\mathcal{K}$ arises as the inverse covariance. The reason for this lies in the gauge freedom by the choice of the field $\chi$, which does not affect the action. In order to get rid of these freedom, we define the action $S$ just on the space of symmetric fields $\phi_0$ and strip the gauge freedom $\chi$ of. This corresponds to the reduction of the elements in an equivalence class (\ref{DEFequivalenceclassphi}) of group fields to one representative, given by the effective $(d-1)$-dimensional group field $\phi_0$. 
The action $S_0[\phi_0]$ thus defines an effective dimensional reduction.

\subsection{Group field networks as observables}
We shall now discuss the construction of generalized tensor networks, realized by entanglement of individual group field states. By construction, such network states shall now be understood as random variables induced by the probabilistic character of their building blocks and interpreted as observables of the associated group field theory.

Let us first recall the graphical visualisation of random fields $\phi\in L^2(G^{\times d}, \mu^{\times d})$ as nodes representing the entanglement of the state with respect to the decomposition of $L^2(G^{\times d}, \mu^{\times d})$ into leg spaces $L^2(G, \mu):=\mathcal{H}$ (see figure \ref{FIGtensor}). Consider now a set of $V$ nodes, where each node $v\in V$ is dressed with a random field $\phi^{(v)} \in L^2(G^{\times d}, \mu^{\times d}):=\mathcal{H}^{\otimes d}$. This corresponds to a state $\Psi$ in the product of the node spaces:
\begin{align}
\Psi=\bigotimes_{v\in V}\phi^{(v)}\in \mathcal{H}^{d\cdot|V|}=\bigotimes_{v\in V}\mathcal{H}_v^{d}
\label{unentangledstate}
\end{align}

Instead of looking at the decomposition of node Hilbert spaces $\mathcal{H}^{d}$ into leg Hilbert spaces, we lift our focus towards decomposition of many-particle Hilbert spaces $\mathcal{H}^{d\cdot|V|}$ into node Hilbert spaces. With respect to such decompositions, the state $\Psi$ is by construction unentangled.

A simple method to construct network states, which are entangled with respect to the node decomposition of the many-particle Hilbert space, is given by the projection of the unentangled state in \ref{unentangledstate} from $\mathcal{H}^{d\cdot|V|}$ to an entangled state in the product leg space. We can visualize such a procedure as a gluing of the open tensor legs, which represents entanglement. 
Consider for instance the unentangled product of two node states $\phi^{(1)}\otimes \phi^{(2)}\in \mathcal{H}^{\otimes d}\otimes \mathcal{H}^{\otimes d}$. A \emph{specific} gluing functional is chosen by an integrated delta distribution $M$ acting on the $a$th leg of the first and the $b$th leg of the second node:
\begin{align}
\phi^{(1)}\otimes \phi^{(2)}\rightarrow M\left( \phi^{(1)}\otimes \phi^{(2)} \right):=\int_{G^{\times 2}}\delta(g^a_1g^b_2)\phi^{(1)}(g_1^1,...,g_1^d)\phi^{(2)}(g_2^1,...,g_2^d)dg^a_1dg_2^b
\label{twogluing}
\end{align} 
The resulting state can no longer be decomposed into a product of states in the open legs of the first and the second node, and is thus entangled. In the established graphical representation scheme of figure \ref{FIGtensor}, this corresponds to a link among the $a$th and $b$th leg connected node states, as sketched in figure \ref{FIGentangled} a).

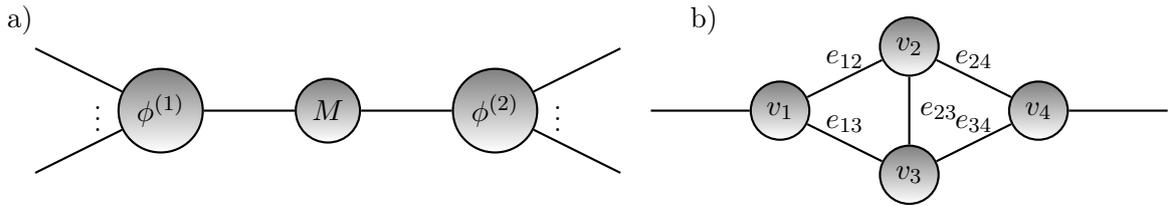
\begin{figure}
\begin{tikzpicture}[thick]
\node (AR) at (2,1.2)  {a)};
\node (AR) at (11,1.2)  {b)};
\begin{scope} [scale=0.55]
\node (N) at (11,0) [circle,shade,draw] {$M$};
\node (A) at (15,0) [circle,shade,draw] {$\phi^{(2)}$};
\draw[-] (A) -- (18,1.5);
\draw[-] (A) -- (N) ;
\node (b) at (16.5,0)  {$\vdots$};
\draw[-] (A) -- (18,-1.5);
\node (A) at (7,0) [circle,shade,draw] {$\phi^{(1)}$};
\draw[-] (A) -- (4,1.5);
\draw[-] (A) -- (N) ;
\node (b) at (5.5,0)  {$\vdots$};
\draw[-] (A) -- (4,-1.5);
\end{scope}
\begin{scope}[shift={(12,0)},scale=0.85]
\node (A1) at (0,0) [circle,shade,draw] {$v_1$};
\node (A2) at (2,1) [circle,shade,draw] {$v_2$};
\node (A3) at (2,-1) [circle,shade,draw] {$v_3$};
\node (A4) at (4,0) [circle,shade,draw] {$v_4$};
\draw[-] (A1) -- (A3) node[midway,above]{$e_{13}$};
\draw[-] (A1) -- (A2) node[midway,above]{$e_{12}$} ;
\draw[-] (A2) -- (A3) node[midway,right]{$e_{23}$} ;
\draw[-] (A4) -- (A3) node[midway,above]{$e_{34}$} ;
\draw[-] (A2) -- (A4) node[midway,above]{$e_{24}$} ;
\draw[-] (A1) -- (-2,0) ;
\draw[-] (A4) -- (6,0);
\end{scope}
\end{tikzpicture}
	\caption{a) Entangling two node states $\phi^{(1)}$ and $\phi^{(2)}$ by action of the functional $M$ on affected leg spaces results in an connected state. b) By repetition of the entangling procedure along all edges of a graph $\Gamma=(V,E\cup \partial \Gamma$) results in a network state $\Phi_{\Gamma}$.}
	\label{FIGentangled}
\end{figure}

It is useful to introduce a notion of link state $M$ \cite{13Chi}, to redefine the gluing transformation (\ref{twogluing}) as a scalar product in the leg spaces. In the basis given in \eqref{DEFgroupbase}, we have 
\begin{align}
\langle M| = \int_{G^{\times2}} dg_adg_b \, \delta(g_a\,g_b)\,\, \ket{g_a} \otimes  \ket{g_b} \quad \in \quad L^2(G^{\times 2}, \mu^{\times 2})
\end{align}
The gluing (\ref{twogluing}) then corresponds to a contraction of the link state $M$, such that
\begin{align}
\ket{\Phi_{12}}:= \braket{M|\phi^{(1)}\otimes \phi^{(2)}}
\end{align}
Notice that this is a stronger notion of gluing than the one used in GFT states, to define states associated with closed graphs, however corresponds to the standard gluing prescription for maximally entangled tensor networks states. More generally, one would consider a link convolution functional $M(g_a^{\dagger}\, h_l\, g_b)$ in the link Hilbert space, the product space of two leg spaces. With the adopted definition one effectively sets $h_l = e$ for all link $l\in\Gamma$. This assumption makes our state $\ket{\Psi_{\Gamma}} $ lying in the flat vacuum 
of loop quantum gravity \cite{Dittrich_2015}.	

\

With the established entangling projections, we are now ready to introduce general network states. Let there be an open graph $\Gamma=(V,E\cup \partial\Gamma)$ consistent of a vertex set $V$ and a set of $E$ edges incident with two vertices and a set $\partial\Gamma$ of open edges incident to single vertices. By iteratively projecting with $\{M^{(e)}\}_{e\in E}$ for each edge comprising two legs of adjacent vertex states $\phi^{(v)}$ transforms (\ref{unentangledstate}) into the network state:
\begin{align}
\ket{\Phi_{\Gamma}}=\bigotimes_{e\in E}\bra{M^{(e)}}\bigotimes_{v\in V}\ket{\phi^{(v)}} \quad\in \quad L^2(G^{\times |\partial \Gamma |}, \mu^{\times |\partial \Gamma |}) 
\end{align}
We thus entangled the group field vertices iteratively by edgewise projections, where the affected legs are determined by the graph $\Gamma$. The resulting network state $\ket{\Phi_{\Gamma}}$ is thus a collection of entangled group fields, where the connectivity of $\Gamma$ corresponds to the entanglement of its vertices (figure \ref{FIGentangled} b)).
%

\section{Holographic Entanglement Entropy for GFT Tensor Networks}

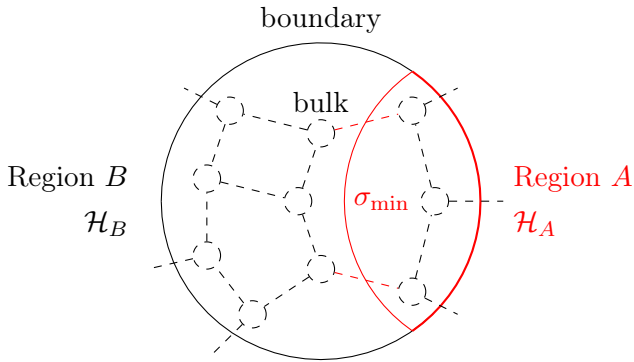
\begin{figure}[h]
\centering
\begin{tikzpicture}[scale=0.6]
\draw [domain=55:180] plot ({3.5*cos(\x)}, {3.5*sin(\x)});
\draw [domain=180:315] plot ({3.5*cos(\x)}, {3.5*sin(\x)});
\draw [red,thick,domain=-55:55] plot ({3.5*cos(\x)}, {3.5*sin(\x)});
\draw [red,domain=-55:55] plot ({-3.5*cos(\x)+4.015}, {3.5*sin(\x)});

\node (A1) at (-0.5,0) [circle,dashed,draw] {};
\node (A2) at (0,1.5) [circle,dashed,draw] {};
\node (A3) at (-2,2) [circle,dashed,draw] {};
\node (A4) at (-2.5,0.5) [circle,dashed,draw] {};
\node (A5) at (-1.5,-2.5) [circle,dashed,draw] {};
\node (A6) at (0,-1.5) [circle,dashed,draw] {};
\node (A7) at (2,-2) [circle,dashed,draw] {};
\node (A8) at (2,2) [circle,dashed,draw] {};
\node (A9) at (2.5,0) [circle,dashed,draw] {};
\node (A10) at (-2.5,-1.2) [circle,dashed,draw] {};

\node (sledz) at (4,0.5)[right,red] {Region $A$};
\node (sledz) at (4,-0.5)[right,red] {$\mathcal{H}_{A}$};
\node (sledz) at (-4,0.5)[left] {Region $B$};
\node (sledz) at (-4,-0.5)[left] {$\mathcal{H}_{B}$};

\node (sledz) at (0.5,0)[right,red] {$\sigma_{\text{min}}$};
\node (sledz) at (0,4)[] {boundary};
\node (sledz) at (0,2.2)[dashed] {bulk};

\draw[dashed] (A5)--(A10)--(A4)--(A1) -- (A2)-- (A3);
\draw[dashed] (A3) -- (A4);
\draw[dashed] (A1)--(A6);
\draw[dashed,red] (A6)--(A7);
\draw[dashed] (A7) -- (A9) -- (A8); 
\draw[dashed,red] (A2) -- (A8);
\draw[dashed]  (A6) -- (A5);
\draw[dashed] (A3) -- (-3,2.5); 
\draw[dashed] (A5) -- (-2.4,-3.4); 
\draw[dashed] (A8) -- (3,2.5);  
\draw[dashed] (A7) -- (3,-2.5);
\draw[dashed] (A9) -- (4,0);
\draw[dashed] (A10) -- (-3.8,-1.5);
\end{tikzpicture}
\caption{Duality between a boundary theory to a theory in the bulk. The Ryu-Takayanagi proposal states a proportionality between $S(\rho_A)$ and $\text{Area}(\sigma_{\text{min}})$. In the dashed network model, $\text{Area}(\sigma_{\text{min}})$ corresponds to the number of dual links.}
\label{FIGrt}
\end{figure}

After introducing the pairwise entanglement projections to construct network states $\ket{\Phi_{\Gamma}}$ in the previous Section, we are now interested in the resulting global entanglement structure of such networks, which we will quantify by the R\'{e}nyi entanglement entropy. In a previous work \cite{13Chi}, building on the established dictionary between GFT states and (generalized) random tensor networks, some of the authors have computed the R\'{e}nyi entropy and derived the Ryu-Takayanagi entropy formula by using a simple approximation to a complete definition of a random tensor network evaluation seen as a GFT correlation function, along the lines given in \cite{Hayden}. Such a derivation was limited to the case of a non-interacting GFT model, leaving open the question about the effect of the interactions on the holographic scaling of the entropy. To answer this question, here we perform the calculation of the entanglement entropy for a simple interacting group field theory model, corresponding to Boulatov's model for topological $3d$ gravity\footnote{It must be noted, however, that such model has limited gravitational features, since it corresponds to topological gravity.} \cite{Boulatov,Ooguri}. From the statistical-mechanics point of view, the interacting model realizes a general non-Gaussian probability distribution over random tensor networks.

We first shortly review the main setting of the derivation of the Ryu-Takayanagi entropy for the group field network state $\ket{\Phi_{\Gamma}}$. Thereby, we focus on the effects of the GFT interaction terms combinatorics in the derivation of the leading contributions to the entanglement entropy. We then provide a set of theorems aiming at a classification of the different interaction combinatoric patterns in the calculation of the Feymann diagrams divergences of the perturbative expansion of the GFT partition function.

\subsection{Replica Trick and Entanglement Statistics}

Let us consider the group field network state $\ket{\Phi_{\Gamma}}$ defined on a $d$-valent graph $\Gamma=(V,E\cup \partial \Gamma)$. To a given partition $A\cup B=\partial \Gamma$ of the open legs we associate a factorization $\mathcal{H}_{\partial \Gamma}=\mathcal{H}_A \otimes \mathcal{H}_B$. Given a network state $\ket{\Phi_{\Gamma}}$, and a density matrix $\rho=\ket{\Phi_{\Gamma}}\bra{\Phi_{\Gamma}}$, the $N$th R\`enyi entanglement entropy of the reduced density matrix $\rho_A:=\tr_B[\rho]$ is defined by
\begin{align}
S_N(A)=\frac{1}{N-1}\ln \left[\frac{\tr_B[\rho_A^{N}]}{\tr_{\partial \Gamma}[\rho]^N}\right]
\label{renyi}
\end{align}

In the limit $N\rightarrow1$, for positive real numbers $N$, the function $S_N(A)$ reduces to the entanglement entropy $S(A)$, which is the von-Neumann entropy of $\rho_A$. Due to the linear character in the arguments, the R\`enyi formula in \eqref{renyi} simplifies the computation of the entanglement. 
Exponentiation of equation (\ref{renyi}) results in
\begin{align}
\label{Renyiexponented}
e^{(1-N)S_N(A)}=\frac{tr_{\mathbb{H}_{(A)}}[\rho_A^N]}{tr_{\mathbb{H}}[\rho]^N}=:\frac{Z^{(N)}_A}{Z^{(N)}_0}
\end{align}
and the calculation of $S_N(A)$ reduces to the computation of the two partition functions

\begin{align}
Z_A^{(N)}:=\tr[\rho_A^{N}]&=\tr_{A}\left[ (\tr_{B\cup E}[\bigotimes\limits_{e\in E} \rho^{(e)} \bigotimes\limits_{v \in V} \rho^{(v)}])^{N} \right]\\
Z_0^{(N)}:=\tr[\rho]^N&=\tr_{A\cup B \cup E}\left[ \bigotimes\limits_{e\in E} \rho^{(e)} \bigotimes\limits_{v \in V} \rho^{(v)} \right]^N
\end{align}
where in $Z_A^{(N)}$ a trace over region $A$ is  performed, after the $N$-times composition of the reduced density $\rho_A$. 

The effect of the trace $\tr_{B\cup E}$ in $Z_A^{(N)}$ can be represented by the action of a swap operator $\mathbb{F}$ permuting the order of the leg space $\mathcal{H}_{\partial A}$ as the cyclic element in the permutation group $\mathcal{S}_N$ \cite{13Chi}. This is apparent by closing the open links in $\{\partial \Gamma\}$ by virtual single valent vertices $w$, which are decorated with the density matrices $\rho^{(w)}=\mathbb{I}$ (the identity) and $\rho^{(w)}=\mathbb{F}$ acting on $\mathcal{H}_{\partial \Gamma}$. 
\begin{align}
\label{defZA}
Z_A^{(N)}&=\tr_{A\cup B \cup E} \left[ \bigotimes\limits_{e\in E\cup \partial \Gamma} (\rho^{(e)})^{\otimes N} \bigotimes\limits_{v \in V} (\rho^{(v)})^{\otimes N} \bigotimes\limits_{w \in A} \mathbb{F} \bigotimes\limits_{w \in B} \mathbb{I}  \right]\\
\label{defZ0}
Z_0^{(N)}&=\tr_{A\cup B \cup E} \left[ \bigotimes\limits_{e\in E\cup \partial \Gamma} (\rho^{(e)})^{\otimes N} \bigotimes\limits_{v \in V} (\rho^{(v)})^{\otimes N} \bigotimes\limits_{w \in \partial \Gamma} \mathbb{I}  \right]
\end{align}
The structure of these variables can be effectively captured by a stranded graphs representation (figure \ref{FIGfullvariable}), where each strand represents the leg of a tensor. The $N$-times tensor product of the node densities in $V$ and $\partial\Gamma$ is represented by $N$ incoming copies, corresponding to the quanta in $\mathcal{H}^{\otimes d}$, and by $N$ outgoing copies, corresponding to the dual states in $\mathcal{H}^{\otimes d}$. Performing the trace over all leg spaces after composition with the link densities corresponds to the contraction of the legs on the incoming and outgoing copies and is represented by connected strands.

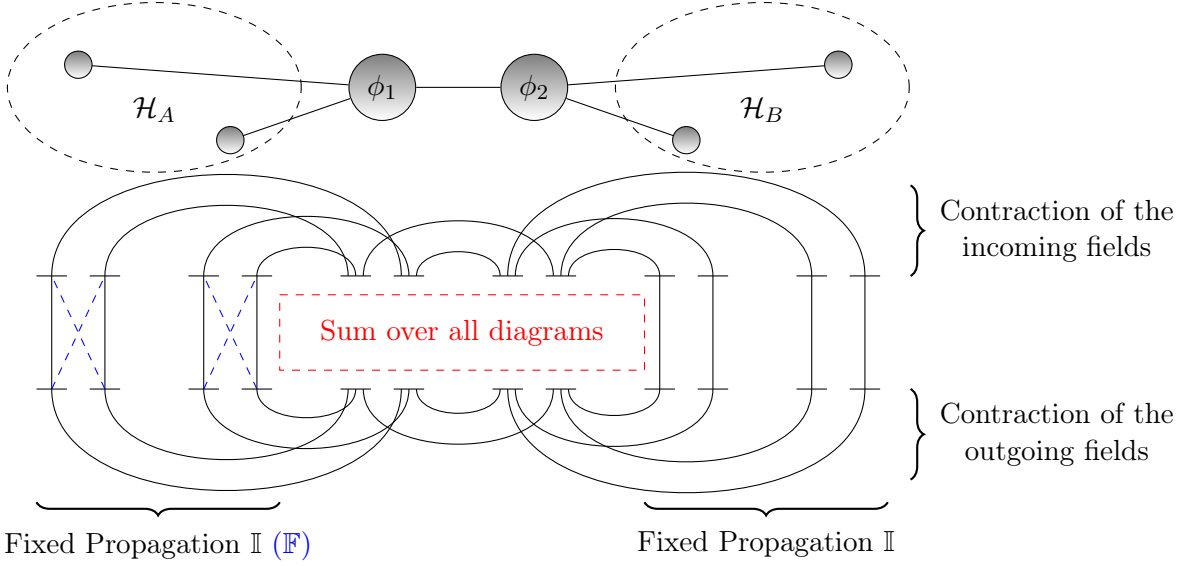
\begin{figure}
\centering
\begin{tikzpicture}
\foreach \i in {0,1,2,3,4,5}
{
\draw[-] (2*\i,0) -- (2*\i+0.4,0) ;
\draw[-] (2*\i+0.7,0) -- (2*\i+1.1,0) ;
\draw[-] (2*\i+0.7,1.5) -- (2*\i+1.1,1.5) ;
\draw[-] (2*\i,1.5) -- (2*\i+0.4,1.5) ;
}

\foreach \x in {0,1}
{
\draw[-] (2*\x+0.2,0) -- (2*\x+0.2,1.5) ;
\draw[-] (2*\x+0.9,0) -- (2*\x+0.9,1.5) ;
\draw[dashed,blue] (2*\x+0.2,0) -- (2*\x+0.9,1.5) ;
\draw[dashed,blue] (2*\x+0.9,0) -- (2*\x+0.2,1.5) ;

\draw[-] (2*\x+8.2,0) -- (2*\x+8.2,1.5) ;
\draw[-] (2*\x+8.9,0) -- (2*\x+8.9,1.5) ;
\draw (0.7-0.7*\x+0.2,0) to[bend left=-90] (0.7*\x+0.1+4,0);
\draw (0.7-0.7*\x+2.2,0) to[bend left=-90] (0.7*\x+0.2+4,0);
\draw (0.7-0.7*\x+0.2,1.5) to[bend left=90] (0.7*\x+0.1+4,1.5);
\draw (0.7-0.7*\x+2.2,1.5) to[bend left=90] (0.7*\x+0.2+4,1.5);
\draw (8.7-0.7*\x+0.2,0) to[bend left=90] (0.7*\x+2.3+4,0);
\draw (8.7-0.7*\x+2.2,0) to[bend left=90] (0.7*\x+2.2+4,0);
\draw (8.7-0.7*\x+0.2,1.5) to[bend left=-90] (0.7*\x+2.3+4,1.5);
\draw (8.7-0.7*\x+2.2,1.5) to[bend left=-90] (0.7*\x+2.2+4,1.5);

\draw (4.7-0.7*\x+0.3,1.5) to[bend left=90] (0.7*\x+2.1+4,1.5);
\draw (4.7-0.7*\x+0.3,0) to[bend left=-90] (0.7*\x+2.1+4,0);
}

\draw[dashed, color=red] (3.2,1.25)--(8,1.25)--(8,0.25)--(3.2,0.25)--(3.2,1.25);
\node (A) at (5.6,0.75) [red] {Sum over all diagrams};

 \draw [thick,decorate,decoration={brace,amplitude=5pt}] (3.2,-1.5) -- (0,-1.5);
 \node (C) at (1.6,-1.75) [below] {Fixed Propagation $\mathbb{I}$ \textcolor{blue}{($\mathbb{F}$)}};
 
 
 \draw [thick,decorate,decoration={brace,amplitude=5pt}] (11.2,-1.5) -- (8,-1.5);
 \node (C) at (9.6,-1.75) [below] {Fixed Propagation $\mathbb{I}$};

 \draw [thick,decorate,decoration={brace,amplitude=5pt}] (11.5,2.7) -- (11.5,1.5);
  \node (C) at (11.75,2.1) [right] {\Umbruch{Contraction of the incoming fields}};
  
 \draw [thick,decorate,decoration={brace,amplitude=5pt}] (11.5,0) -- (11.5,-1.2);
 \node (C) at (11.75,-0.6) [right] {\Umbruch{Contraction of the outgoing fields}};
 
\node (A) at (4.55,4) [circle,shade,draw] {$\phi_1$};
\node (B) at (6.55,4) [circle,shade,draw] {$\phi_2$};
\draw (A)--(B);
\draw (A)--(2.55,3.3)node [circle,shade,draw] {};
\draw (A)--(0.55,4.3)node [circle,shade,draw] {};
\draw (B)--(8.55,3.3)node [circle,shade,draw] {};
\draw (B)--(10.55,4.3)node [circle,shade,draw] {};

\draw[dashed] (1.55,4) ellipse (55pt and 32pt) node[below] {$\mathcal{H}_{A}$};  
\draw[dashed] (9.55,4) ellipse (55pt and 32pt) node[below] {$\mathcal{H}_{B}$}; 
\end{tikzpicture}
\caption{Stranded graph representation of the variables $Z_0^{(2)}$ (Propagation $\mathbb{I}$ in $A$ and $B$) and $Z_A^{(2)}$ (Propagation $\mathbb{F}$ in $A$ and $\mathbb{I}$ in $B$). Any stranded diagram contributing to the $2N|V|$-point function produces a contribution to the variables by the fixed contraction scheme after insertion into the red box.}
\label{FIGfullvariable}
\end{figure} 

\subsection{Calculation setting for an interacting GFT Model}


The group field probability distribution $d\nu(\phi)$ has support on the space of integrable fields $L^2(G^{\times d},\mu^{\times d})$ with respect to the Haar measure $\mu$ on $G$. Taking this space as product of leg spaces $\mathcal{H}$, the space of node tensors $\ket{\phi^{(v)}}$ is given by $\mathcal{H}^{\otimes d}:=L^2(G^{\times d},\mu^{\times d})$. In case of finite groups $G$, the space $L^2(G^{\times d},\mu^{\times d})$ has finite dimensions and the theory reduces to random tensors, as we have discussed in Section \ref{SECRandomTN}. 

For the calculations in the following Sections we need to specify the probability measure $d\nu(\phi)$ of Haar-integrable group fields $\phi$ by its weight. With respect to the uniform measure, the weight is the exponential of the negative action $S(\phi)$, which consists in the perturbation of a Gaussian weight $S_0$ with the term $\lambda S_{\text{int}}$, as described in \eqref{DEFperturbation}. For any possible choice of the kernels $\mathcal{K}$, we further demand the symmetry condition
\begin{align}
\mathcal{K}(g_1,...,g_d,\overline{g}_1,...,\overline{g}_d)=\mathcal{K}(h\circ g_1,...,h\circ g_d,\overline{h}\circ\overline{g}_1,...,\overline{h} \circ \overline{g}_d) \quad \forall h,\tilde{h}\in G\,,
\label{symmetryK}
\end{align}
This choice, corresponding to enforcing a kinematical \emph{closure constraint} at the nodes, is the minimal condition to put the networks dynamics in relation to topological 3$d$-gravity models \cite{Rovelli,Thiemann,01Ash,02Per}. Because the symmetry (\ref{symmetryK}) prevents the kernel $\mathcal{K}$ from being invertible, we need to restrict the probability measure (\ref{DEFfreemeasure}) to the space of $G$-symmetric group fields, which were discussed in Section \ref{SECsymmetricgroupfields}. However, we can safely treat the probability measures as defined on the full space $L^2(G^{\times d},\mu^{\times d})$ and ensure the symmetry constraint by projections, which are implemented by averages (\ref{symmetryprojection}) over the gauge parameters $h$.

Notice that while a simple kernel (\ref{deltakernel}) is constructed by delta functions between the arguments $g$ and $\overline{g}$ of the same index, the symmetry (\ref{symmetryK}) is enforced by group averaging due to the invariance of the Haar measure $\mu$:
\begin{align}
\mathcal{K}^{\text{sym}}(g_1,...,g_d,\overline{g}_1,...,\overline{g}_d)=\int d\mu(h) \, \prod_{i=1}^d \delta(h\circ g_i\circ \overline{g}_i^{-1})
\label{choiceK}
\end{align}
The kernel (\ref{choiceK}) corresponds graphically to a collection of strands, each representing a delta function between incoming arguments $g$ and outgoing arguments $\overline{g}$ (figure \ref{FIGsymkernels}). 
The interaction kernel $\mathcal{V}$ is defined in an analogous way, a specific form for the rank-$3$ group field theory under study is given in Section \ref{SECint}.\\

\begin{figure}[t]
\centering
\begin{tikzpicture}[scale=0.5]
\begin{scope}[shift={(-5,0.3)},rotate=90]
\draw[-] (0,0) node[below] {$\overline{\phi}$} -- (5,0) node[above] {$\phi$} ;
\draw[-] (0,0.5) -- (5,0.5);
\draw[-] (0,-0.5) -- (5,-0.5); 
\draw[-] (2.5,0) ellipse (8pt and 30pt) ; 
\end{scope}

\node (X) at (-7,5.5){$\mathcal{K}:$};
\node (X) at (0,5.5){$\mathcal{V}:$};

\begin{scope}[shift={(6,0.6)},scale=0.8,rotate=45]
\draw(-3,4) arc (270:360:3);
\draw(1,7) arc (180:270:3);
\draw(0,0) arc (0:90:3);
\draw[-] (0.5,0) node[right] {} -- (0.5,7) node[left] {};
\draw[-] (0.5,1) ellipse (30pt and 8pt); 
\draw[-] (0.5,6) ellipse (30pt and 8pt); 

\draw[-] (-3,3.5) node[left] {} -- (4,3.5) node[right] {} ;
\draw[-] (-2,3.5) ellipse (8pt and 30pt); 
\draw[-] (3,3.5) ellipse (8pt and 30pt); 
\draw(4,3) arc (90:180:3);

\node (A) at (-3.3,3.2) [left] {$\phi_2$};
\node (A) at (0.3,-0.2) [right] {$\overline{\phi}_2$};
\node (A) at (0.8,7.3) [left] {$\phi_1$};
\node (A) at (4.3,3.8) [right] {$\overline{\phi}_1$};
\end{scope}
\end{tikzpicture}
\caption{Stranded graph representation of the propagation kernel $\mathcal{K}$ in \eqref{choiceK} and the interaction kernel $\mathcal{V}$ in \eqref{DEFv} for the case of a rank $d=3$ group field. Strands represent delta functions and ellipses gauge parameter $h$ shifting the arguments. The group element $h$ enforcing the symmetry of $\mathcal{K}$ appears in all delta functions and its graphically denoted by an ellipse crossing the strands. }
\label{FIGsymkernels}
\end{figure}
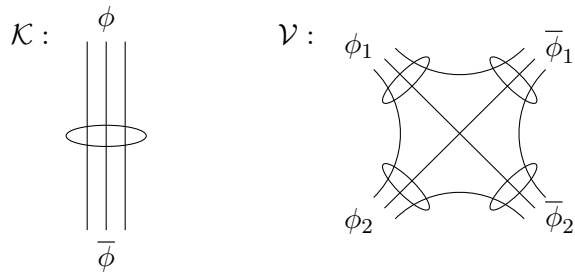

Now, the field-theoretic description allows us to describe the entropy $S_N(A)$, as well as the partition functions $Z_0^{(N)}$ and $Z_A^{(N)}$, as random variables dependent on the field configuration $\phi^{(v)}$, taken to be identical for all vertices $v$. In particular, given the field-theoretic random character, we expect the fluctuations of $S_N(A)$ around its average $\mathbb{E}[S_N(A)]$ to be exponentially suppressed in the limit of high bond dimension of the leg spaces, as a consequence of the phenomenon of measure concentration, as suggested in \cite{HayEnt}. This has an important impact on the derivation of the entanglement entropy, which effectively reduces to a computation of the expectation value $\mathbb{E}[S_N(A)]$. 

Analogously, the variables $Z_0^{(N)}$ and $Z_A^{(N)}$ shall concentrate around their averages, allowing for a further approximation of $\mathbb{E}[S_N(A)]$ in terms of the individual averages $\mathbb{E}[Z_0^{(N)}]$ and $\mathbb{E}[Z_A^{(N)}]$ \cite{Hayden},
\begin{align}
\mathbb{E}[S_N(A)]\approx\frac{1}{N-1}\ln\left[\frac{\mathbb{E}[Z_A^{(N)}]}{\mathbb{E}[Z_0^{(N)}]}\right]
\label{centralapproximation}
\end{align}
which are now defined in terms of the perturbed Gaussian measure for the interacting GFT, \begin{align}
\mathbb{E}[Z_{A/0}^{(N)}]= \int_{L^2(G^{\times d})} d\nu(\phi)\, Z_{A/0}^{(N)}[\phi]=\int_{L^2(G^{\times d})} [D\phi]\, Z_{A/0}^{(N)}[\phi] \, e^{-\left[S_0(\phi)+\lambda S_{\text{int}}(\phi)\right]}\quad.
\label{expectation}
\end{align}

By taking the perturbative expansion of $\mathbb{E}[Z_{A/0}^{(N)}]$ in orders of $\lambda$ and applying Wicks theorem for Gaussian random variables, the expectation (\ref{expectation}) decomposes into a sum of associated Feynman diagrams contributions. Due to the linearity of the trace, as sketched in figure \ref{FIGfullvariable}, the integration over the group fields can be carried out independently before the contraction with the links. The expectations $\mathbb{E}[Z_{A/0}^{(N)}]$ define $2N|V|$-point functions of the group field theory, which are then contracted by a pattern determined by the network geometry. The red box in figure \ref{FIGfullvariable} represents the sum of all Feynman diagrams of the $2N|V|$-point functions, each of which corresponds to a fully contracted diagram. 

Each stranded Fenyman diagram $\mathcal{G}$ consists of edges $l$ representing field propagations, which are bundles of parallel strands, and faces $f$ defined by closed strands with boundary $\partial f$, given by an oriented and ordered collection of edges. A holonomy $h_l$ is associated with each edge $l$, corresponding to the group element enforcing the symmetry constraint. Each face $f$ contributes with a delta function of its holonomy and the amplitude of a diagram $\mathcal{G}$ is given by:

\begin{align}
\mathcal{A}[\mathcal{G}]=\int [\prod\limits_{l\in \mathcal{G}} dh_l] \prod\limits_{f \in \mathcal{G}} \delta(\prod_{l\in \partial f}h_l)
\label{amplitude}
\end{align}

From the infinite number of possible diagrams $\mathcal{G}$ representing contracted propagation and interaction processes, we want to identify the processes with dominant amplitudes in the limit of high dimensions $D$ of the leg spaces $L^{2}(G,\mu)$, which we generically take as finite-dimensional through of a sharp cut-off $\Lambda$ in the group representation, such that
\footnote{A stronger restriction consists in restricting to finite groups $G$ with $D$ elements. In this case, the Dirac distribution $\delta$ can be understood as an element in $L^2(G,\mu)$ and it holds $\delta(e)=D$ for the identity $e$ in $G$. Nevertheless, the group should remain non-abelian. More radically, one could generalise the derivation and regularize the divergences via ``box'' normalization of $\delta_g \in L^2[G, \d\mu]$ by using quantum groups. As shown in \cite{MAJOR1996267, PhysRevD.84.064010}, the quantum deformation relates to the cosmological constant $\Lambda$ in the semi-classical regime of the spinfoam formalism.
Interestingly, the cosmological constant $\Lambda$ in the link space dimension \eqref{bond} would make our vacuum state a dS vacuum if $\Lambda>0$ and AdS vacuum if $\Lambda<0$.} 
\begin{equation}\label{bond}\delta(e) = D(\Lambda)\end{equation}.
  
Therefore, the divergence degree $\Omega$ of $\mathcal{G}$ is defined as the exponent of
\begin{align}
\mathcal{A}[\mathcal{G}]=:\delta(e)^{\Omega[\mathcal{G}]}
\end{align}
The dominant contributions to the averages (\ref{expectation}) are the maxima of $\Omega[\mathcal{G}]$. Their number and the maximal divergence degree determine the asymptotic behavior of the expectations such as (\ref{expectation}) and are thus of central interest in the derivation of the expected entanglement entropy (\ref{centralapproximation}). The following Sections are dedicated to identifying the maximal divergence degrees in the different settings of free and perturbed group field theory.


\section{Maximal Divergent Contributions: A General Scheme}
Diagrams $\mathcal{G}$ representing contributions to the expectation values of $Z_{A/0}^{(N)}$ can have various shapes, but only the maximal in the bond dimension $D$ divergent contributions are relevant in the calculation of entanglement entropies. Bounds of the divergence degree are given by the number of faces of each diagram, which motivate us to study the maximum number of faces for the two observables $Z_0^{(N)}$ and $Z_A^{(N)}$. In the free theory, we will find the maximal face numbers in a class of diagrams that we call \emph{locally averaged diagrams} and restrict our search for the maximal divergence degree afterwards to this class. 

\subsection{Local Processes by Independent Node Averaging}
\label{SEClocalevidence}

Let us first consider the case of a small number of interactions in a diagram $\mathcal{G}$, corresponding to a term in a small order of $\lambda$ in the perturbative expansion (\ref{intexpension}) of the expectation value $\mathbb{E}[Z_{0/A}^{(N)}]$. In this case, the stranded structure sketched in figure \ref{FIGfullvariable} is dominated by the contraction scheme determined by the network topology. In particular, the number of edges carrying gauge parameters integrated in the amplitude \eqref{amplitude} is proportional to the number of network nodes. This allows for an estimation of the number of parameter evaluation needed to fix all face holonomies, suggesting a correlation between the number of faces, which we will denote by $\widetilde{\Omega}[\mathcal{G}]$, and the divergence degree $\Omega[\mathcal{G}]$. We shall use this intuition to single out a class of diagram, where the divergence $\Omega$ is expected to be maximal. 

Due to their different structure at the boundary $\partial \Gamma$ (figure \ref{FIGfullvariable}), we distinguish between the variables $Z_0^{(N)}$ and $Z_A^{(N)}$ and proof optimality statements of different generality. A central aspect is the notation of locality, which in our context is understood as propagation processes happening just between field copies associated with the same network vertex $v$. Such propagation processes are indexed by the permutation group $S_N$ as introduced in (\ref{permutationdecompositon}), where the identity $\mathbb{I}$ and the cyclic permutation $\mathbb{F}$ are of particular interest.

\begin{theorem}
Let us assume a network graph $\Gamma=(V,E\cup \partial\Gamma)$, such that every node is path connected to boundary links. The only diagram contributing to $\mathbb{E}[Z_0^{(N)}]$ in the free sector, which maximizes the face number $\widetilde{\Omega}$, is given by local propagation with the identity $\mathbb{I}$ at each node. 
\label{THEconnected}
\end{theorem}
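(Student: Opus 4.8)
The plan is to reduce the statement to a purely combinatorial optimization over the permutations that label the local Wick contractions, and then to exploit the cycle structure of permutations together with the path-connectivity hypothesis. In the free sector, Wick's theorem (the group-field analogue of Theorem \ref{THEwicktensor}) expands $\mathbb{E}[Z_0^{(N)}]$ into a sum over pairings of the $N|V|$ copies of $\phi$ with the $N|V|$ copies of $\overline\phi$ produced by the $N$ replicas of the $|V|$ identical node fields. Restricting to the locally averaged diagrams, each admissible pairing is encoded by a collection $\{\pi_v\in S_N\}_{v\in V}$, where $\pi_v$ records which incoming replica at node $v$ propagates to which outgoing replica, while the boundary prescription of $Z_0^{(N)}$ in \eqref{defZ0} fixes the identity $\mathbb{I}$ on every open leg. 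The first task is therefore to express the face number $\widetilde\Omega$ explicitly as a function of $\{\pi_v\}$.

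The way I would do this is to trace the closed strands of the doubled (ket--bra) stranded diagram. Following a strand through a propagator (which routes replica $k$ to replica $\pi_v(k)$ simultaneously on all $d$ legs), through a replica-preserving link state $M^{(e)}$, and through the replica-preserving boundary contractions, one sees that the faces organise into independent families indexed by internal edges and by open legs. Concretely I expect
\begin{align}
\widetilde\Omega[\{\pi_v\}]=\sum_{e=(v,v')\in E} c\!\left(\pi_v^{-1}\pi_{v'}\right)+\sum_{v\in V} b_v\, c(\pi_v),
\label{facedecomp}
\end{align}
where $c(\sigma)$ is the number of cycles of $\sigma\in S_N$ and $b_v$ is the number of open legs at $v$: an edge strand closes after a cycle of $\pi_v^{-1}\pi_{v'}$, while each open leg, carrying $\mathbb{I}$, closes after a cycle of $\pi_v$. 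Establishing \eqref{facedecomp} rigorously -- keeping track of the ket--bra doubling, of the fact that all $d$ strands of a given propagator are permuted by the \emph{same} $\pi_v$, and of the identity contraction enforced at the boundary -- is the main technical obstacle, and it is here that the specific structure of $Z_0^{(N)}$ (as opposed to $Z_A^{(N)}$, whose open legs in $A$ carry $\mathbb{F}$) enters.

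Granting \eqref{facedecomp}, the optimization is immediate from the elementary bound $c(\sigma)\le N$, with equality if and only if $\sigma=\mathbb{I}$. Each summand is then bounded by $N$ (by $b_vN$ for the boundary terms), whence $\widetilde\Omega\le N\,(|E|+|\partial\Gamma|)$, and equality forces simultaneously $\pi_v=\pi_{v'}$ on every internal edge and $\pi_v=\mathbb{I}$ at every node with $b_v>0$. The final step invokes the hypothesis that every node is path connected to the boundary: along a path of internal edges from an arbitrary node $v$ to a boundary node $w$, the edge conditions propagate the chain of equalities $\pi_v=\dots=\pi_w$, while the boundary condition pins $\pi_w=\mathbb{I}$; hence $\pi_v=\mathbb{I}$ for all $v\in V$. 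This simultaneously shows that the all-identity local diagram attains the maximum and that it is the unique maximizer. One remaining point -- relevant if the statement is read over all free diagrams rather than within the locally averaged class -- is to confirm that non-local contractions cannot raise $\widetilde\Omega$ beyond this value; I would address it by bounding the face count of a general pairing $\sigma\in S_{N|V|}$ with the same cycle estimate and verifying that the local identity stays optimal.
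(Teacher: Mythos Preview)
Your argument is correct for the locally averaged class, and the cycle-count formula \eqref{facedecomp} together with the propagation of equalities along paths to the boundary is exactly the mechanism the paper later sets up in Section~\ref{SECfreelocal} (the formula $\widetilde\Omega[\{\pi_v\}]=\sum_{(e_1,e_2)\in E\cup\partial\Gamma}\chi(\pi_{e_1}^{-1}\pi_{e_2})$) and uses for Theorem~\ref{THfaces}. The paper's own proof of Theorem~\ref{THEconnected}, however, proceeds differently and in one respect more generally: instead of first deriving the cycle formula for local patterns, it observes that \emph{any} free diagram has each face passing through at least one incoming-side link contraction, so $\widetilde\Omega$ is bounded by the total number $N|E\cup\partial\Gamma|$ of such contractions; this bound holds for non-local Wick pairings as well, without further work. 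Uniqueness is then obtained by an induction propagating inward from the boundary: if the $k$th replica at a neighbour already propagates to itself, the $k$th link strand can form its own face only if the $k$th replica at $v$ does the same, forcing local identity copy by copy. Your route has the advantage of making the edge-by-edge equality structure completely explicit via $c(\pi_v^{-1}\pi_{v'})=N\Leftrightarrow\pi_v=\pi_{v'}$, but the paper's face-counting bound buys the non-local case for free, which in your proposal is only sketched at the end.
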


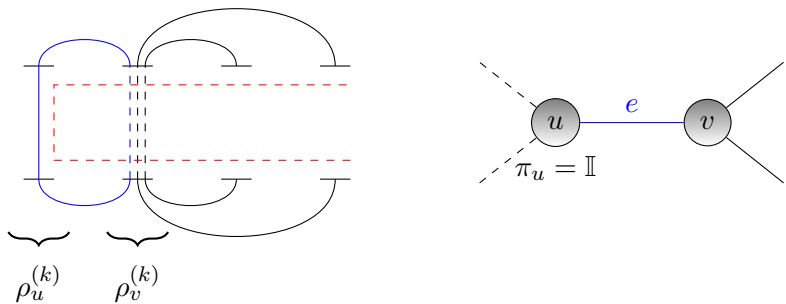
\begin{figure}[t]
\centering
\begin{tikzpicture}
\foreach \i in {0,1,2,3}
{
\draw[-] (1.3*\i,0) -- (1.3*\i+0.4,0) ;
\draw[-] (1.3*\i,1.5) -- (1.3*\i+0.4,1.5) ;
}

\draw[blue] (0.2,0)--(0.2,1.5);
\draw[blue,dashed] (1.4,0)--(1.4,1.5);
\draw[dashed] (1.5,0)--(1.5,1.5);
\draw[dashed] (1.6,0)--(1.6,1.5);

\draw[blue] (0.2,0) to[bend left=-90] (1.4,0);
\draw (2.8,0) to[bend left=90] (1.6,0);
\draw (4.1,0) to[bend left=90] (1.5,0);
\draw[blue] (0.2,1.5) to[bend left=90] (1.4,1.5);
\draw (2.8,1.5) to[bend left=-90] (1.6,1.5);
\draw (4.1,1.5) to[bend left=-90] (1.5,1.5);
\draw[dashed, color=red] (4.3,0.25)--(0.4,0.25)--(0.4,1.25)--(4.3,1.25);
 \draw [thick,decorate,decoration={brace,amplitude=5pt}] (0.6,-0.7) -- (-0.2,-0.7); 
\draw [thick,decorate,decoration={brace,amplitude=5pt}] (1.9,-0.7) -- (1.1,-0.7);
\node (C) at (0.2,-1) [below] {$\rho_u^{(k)}$} ;
\node (C) at (1.5,-1) [below] {$\rho_v^{(k)}$} ;

\begin{scope}[shift={(9,0.75)}]
\node (A) at (0,0) [circle,shade,draw] {$v$};
\node (B) at (-2,0) [circle,shade,draw] {$u$};
\node (C) at (-2,-0.6)[]{$\pi_u=\mathbb{I}$};
\draw[blue] (A)--(B) node[midway,above]{$e$};
\draw[dashed] (-3,-0.8)--(B)--(-3,0.8);
\draw (1,0.8)--(A)--(1,-0.8);
\end{scope}
\end{tikzpicture}
\caption{Sketch of the induction argument, exploited in the proof of theorem \ref{THEconnected}: If in the $k$th network copy a local propagation $\mathbb{I}$ is happending at a neighboring node $u$, the same process must happen at $v$ for the blue strand to close directly.}
\label{FIGiterationargumentz0}
\end{figure}

\begin{proof}
Let us treat the open links of the network as incident to single-valent boundary nodes with fixed local propagation determined by the boundary conditions, thus in the case of $Z_0^{(N)}$ by the symbol $\mathbb{I}$. Each face of a diagram in the free theory includes at least one contraction of the incoming fields, since each outgoing field propagates to an incoming field. If thus each face includes just one incoming contraction, as it is the case for the pattern $\{\mathbb{I}\}$, the divergence degree is maximal.
We prove the uniqueness of this maximal divergent case by induction through the network starting from its boundary with fixed propagation $\mathbb{I}$. Let us assume the fields of a neighbor $v$ to the node $u$ are propagating locally with the symbol $\mathbb{I}$, as sketched in figure \ref{FIGiterationargumentz0}. For the $k$th copy of the link $e$ between $v$ and $u$ to be the single incoming contraction of the associated face, the $k$th copy of the incoming and outgoing field of the node $u$ must propagate into each other, as sketched dashed in figure \ref{FIGiterationargumentz0}. Applying this argument to all $N$ copies of the link, the field copies of the node $u$ have to propagate locally with the symbol $\mathbb{I}$, such that the link to $v$ contributes maximally to the face number. Since the network is path connected to the boundary, the induction reaches all nodes.
\end{proof}

In contrast to the homogeneous boundary situation of the variable $Z_0^{(N)}$, the boundary regions $A$ and $B$ are differently treated in the variable $Z_A^{(N)}$. By assuming the boundary regions to be connected by the bulk network, we will always find closed strand including more than one contraction of incoming fields. To proof a similar result to theorem \ref{THEconnected}, in this boundary situation, we need to make further assumptions on the network graph.

\begin{theorem}
\label{THEtreenetwork}
Let us assume a tree graph $\Gamma=(V,E\cup \partial\Gamma)$ and a partition $A\cup B = \partial\Gamma$, such that we find a link $e\in E$ separating regions connected to $A$ and $B$. Then all diagrams contributing to $\mathbb{E}[Z_{A}^{(N)}]$ in the free sector, which maximize the face number $\widetilde{\Omega}$, are included in the local averages. 
\end{theorem}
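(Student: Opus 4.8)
The plan is to organize all free-sector diagrams by their node-wise contraction data and to reduce the mixed-boundary count to the homogeneous count of Theorem~\ref{THEconnected} on either side of the separating link. Write $e^{*}$ for the link whose removal splits $\Gamma$ into a subtree $T_A$ carrying all of $A$ and a subtree $T_B$ carrying all of $B$. For a \emph{local} diagram, i.e. one assigning a single permutation $\sigma_v\in S_N$ to each node, the face number is controlled link by link: each internal or boundary link $e=(u,v)$ contributes $C(\sigma_u\sigma_v^{-1})$ closed strands, where $C$ denotes the number of cycles and the boundary virtual nodes are frozen to $\mathbb{F}$ on $A$ and $\mathbb{I}$ on $B$. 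Writing $C(\sigma)=N-d(\sigma)$ with $d$ the Cayley (minimal-transposition) distance, maximizing $\widetilde{\Omega}$ over local diagrams becomes the minimization of the total domain-wall cost $\sum_e d(\sigma_u\sigma_v^{-1})$ subject to the two fixed boundary values $\mathbb{F}$ and $\mathbb{I}$.

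The first substantive step is to solve this optimization on the tree. By the triangle inequality for $d$, every path from an $A$-leaf to a $B$-leaf accumulates cost at least $d(\mathbb{F}\mathbb{I}^{-1})=N-1$; since every such path is forced to traverse $e^{*}$, the minimal total cost is exactly $N-1$, realized by concentrating the whole jump $\mathbb{I}\to\mathbb{F}$ on a single separating link. Away from this minimal cut each domain is homogeneous, so the uniqueness induction of Theorem~\ref{THEconnected}, applied to each domain with the fixed symbol $\mathbb{F}$ on the $A$-side and $\mathbb{I}$ on the $B$-side, forces $\mathbb{F}$ throughout the $A$-domain and $\mathbb{I}$ throughout the $B$-domain. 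One must still check that trading internal faces for interface faces never helps: pushing the wall off $e^{*}$ into a domain recovers interface faces but pays an equal internal price, so such moves merely relocate the wall and tie the maximum. This is why the statement claims inclusion in, rather than equality with, a single local average — every single-wall assignment supported on a separating link is an equally good local maximizer.

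The delicate point, and the step I expect to be the main obstacle, is excluding \emph{non-local} diagrams. In the free sector all node fields are identical, so an incoming field at a node of $T_A$ may be Wick-contracted with an outgoing field at a node of $T_B$; such a pairing ignores the bridge $e^{*}$ and destroys the link-wise decomposition above. I would handle it by a surgery argument that crucially uses the acyclicity of the tree: encoding an arbitrary pairing as a single permutation of all field copies and writing $\widetilde{\Omega}$ as the cycle number of its product with the fixed gluing permutation, the sub-additivity of the cycle count under composition (again the triangle inequality for the Cayley distance) shows that re-localizing any cross-node pairing cannot decrease $\widetilde{\Omega}$. Because $\Gamma$ contains no cycle, a genuinely non-local strand must close along a unique graph path that no competing face can reuse, so each such re-localization \emph{strictly} increases the face number; iterating drives every maximizer to a local assignment. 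Hence no non-local diagram attains the maximum, every maximizer lies in the local averages, and the proof is complete. The tree hypothesis is essential precisely here: in the presence of a graph cycle a non-local contraction could wind around it to recover a face, and locality would no longer be forced.
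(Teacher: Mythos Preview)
Your treatment of the \emph{local} optimization is correct but is really the content of Theorem~\ref{THfaces}, not of the present statement; what Theorem~\ref{THEtreenetwork} actually asserts is only that every face-number maximizer is local, and your proposal does not establish this. The gap is in your surgery step for non-local pairings: you assert that ``re-localizing any cross-node pairing cannot decrease $\widetilde{\Omega}$'' by an appeal to sub-additivity of the cycle count, and then upgrade this to a \emph{strict} increase ``because $\Gamma$ contains no cycle, a genuinely non-local strand must close along a unique graph path that no competing face can reuse''. Neither claim is justified. The global encoding of a Wick pairing as a single permutation on $N|V|$ letters is fine, but the Cayley triangle inequality then compares cycle numbers of \emph{two} permutations with their product; it does not tell you what happens to $\widetilde{\Omega}$ when you replace one transposition-type move (the non-local pairing) by another (its localized version) inside a fixed product with the gluing permutation. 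And the ``unique path'' intuition conflates the graph $\Gamma$ with the stranded diagram: a non-local propagation line is not a strand and need not follow any graph path, so acyclicity of $\Gamma$ does not by itself force a strict face gain.

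The paper's argument is a concrete, node-by-node induction that avoids this difficulty. One orders the nodes of each subtree by their distance to the separating link and proceeds from the leaves inwards. At a $3$-valent node $v$ whose two children $u_1,u_2$ are already known (by the inductive hypothesis) to propagate locally with the same symbol, suppose some incoming copy at $v$ pairs non-locally. By \emph{following the two child strands} one locates a specific outgoing copy at $v$ such that re-pairing these two copies locally closes one additional face on each of the two child links (net $+2$), while the third (parent) link can lose at most one face (net $\geq +1$). This explicit count is what forces the strict increase and hence the contradiction with maximality; the tree structure enters only to guarantee that the layer induction reaches every node. Your global Cayley-distance framework does not supply this local $+2/{-1}$ accounting, and without it the exclusion of non-local diagrams remains unproved.
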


\begin{proof}
After omission of the link $e$ we have two trees, the first connected just to the region $A$ and the second to $B$. By the minimal path length $k$ to the root node, taken to be the node incident to $e$, we classify each node of both trees into a layer. Assuming a diagram maximizing $\widetilde{\Omega}$ we will now proof in both trees the local propagation by induction from the deepest layer to the layers with smaller indices. As in the proof of theorem \ref{THEconnected}, the induction starts with boundary links, which are treated as additional network nodes with fixed propagation by $\mathbb{I}$ in the tree to $A$ and $\mathbb{F}$ to $B$.

Let us thus assume a node $v$ in the $k$th layer, which has two children $u_1$ and $u_2$ in the $(k$+$1)$th layer with the same local propagation happening w.l.o.g. by $\mathbb{I}$ (figure \ref{FIGiterationargument} b). Let us further assume, that an incoming copy of the node field is involved in a nonlocal propagation, as sketched in blue. We find a nonlocal propagating outgoing copy to the nonlocal propagating incoming copy, such that the strands associated with the links from $v$ to $u_1$ and $u_2$ connect both nonlocal propagating copies of the node field $v$\footnote{This can be done by simply following two strands, which need to close in each situation.}. We now modify the process by a local propagation between the identified pair of nonlocal propagating incoming and outgoing fields, sketched in figure \ref{FIGiterationargument} b) by dark dashed lines. Due to the special choice of the nonlocal propagating pair at the node $v$, the number of strands with the first two indices increases by two. This compensates the maximal decrease of one in the number of closed strands associated with the third index. The modified diagram has thus a higher number of faces, thus the assumption of a nonlocal propagation contradicts the assumption of maximal face number $\widetilde{\Omega}$. 

Induction with decreasing layer number $k$ reaches both subtrees of the network, thus at each node a local process has to happen to maximize $\widetilde{\Omega}$.
\end{proof}

\begin{figure}[t]
\centering
\begin{tikzpicture}
\node (Z) at (-1,3) {b)};

\begin{scope}[shift={(-6,1.5)}]
\node (Y) at (0,1.5) {a)};

\node (A) at (1,1) [circle,shade,draw] {};
\node (B) at (0.4,0) [circle,shade,draw] {};
\node (C) at (1.6,0) [circle,shade,draw] {};
\draw (B)--(A)--(C);
\draw (0.8,-1) -- (B)--(0,-1);
\draw (1.2,-1) -- (C)--(2,-1);


\begin{scope}[shift={(2.5,0)}]
\node (A1) at (1,1) [circle,shade,draw] {};
\node (B1) at (0.4,0) [circle,shade,draw] {};
\node (C1) at (1.6,0) [circle,shade,draw] {};
\draw (B1)--(A1)--(C1);
\draw (0.8,-1) -- (B1)--(0,-1);
\draw (1.2,-1) -- (C1)--(2,-1);
\end{scope}
\draw (A)--(A1);

\draw[dashed] (3.5,-1) ellipse (35pt and 15pt); 
\node (X) at (3.5,-1.2) [] {$\mathcal{H}_{B}$};

\draw[dashed] (1,-1) ellipse (35pt and 15pt); 
\node (X) at (1,-1.2) [] {$\mathcal{H}_{A}$};
\end{scope}

\foreach \i in {0,1,2,3}
{
\draw[-] (1.3*\i,0) -- (1.3*\i+0.4,0) ;
\draw[-] (1.3*\i,1.5) -- (1.3*\i+0.4,1.5) ;
}

\draw (0.2,0)--(0.2,1.5);
\draw (1.5,0)--(1.5,1.5);

\draw (0.2,1.5) to[bend left=90] (2.8,1.5);
\draw (1.5,1.5) to[bend left=90] (2.7,1.5);
\draw (2.9,1.5) to[bend left=90] (4.1,1.5);

\draw (0.2,0) to[bend left=-90] (2.8,0);
\draw (1.5,0) to[bend left=-90] (2.7,0);
\draw (2.9,0) to[bend left=-90] (4.1,0);

\draw[dashed, color=red] (6.4,0.25)--(1.7,0.25)--(1.7,1.25)--(6.4,1.25);

\draw[-] (5.6,0) -- (6,0) ;
\draw[-] (6,1.5) -- (6.4,1.5) ;

\draw (6.1,1.5) to[bend left=-60] (5.8,1.8);
\draw (6.2,1.5) to[bend left=30] (6.4,1.9);
\draw (6.3,1.5) to[bend left=60] (6.6,1.8);

\draw (5.7,0) to[bend left=60] (5.4,-0.3);
\draw (5.8,0) to[bend left=-30] (6,-0.4);
\draw (5.9,0) to[bend left=-60] (6.2,-0.3);

\draw[blue,dashed] (2.8,1.5) -- (5.8,0);
\draw[blue,dashed] (2.8,0) -- (6.2,1.5);
\draw[dashed] (2.8,1.5) -- (2.8,0);
\draw[dashed] (6.2,1.5) -- (5.8,0);

\node (A) at (2.8,3) [circle,shade,draw] {$v$};
\node (B) at (1.5,2.7) [circle,shade,draw] {};
\node (B1) at (1.3,2.7) [left] {$\pi_{u_2}=\mathbb{I}$};

\node (C) at (0.2,3.3) [circle,shade,draw] {};
\node (C1) at (0,3.3) [above] {$\pi_{u_1}=\mathbb{I}$};

\node (D) at (4.1,3) [circle,shade,draw] {};

\draw (C) -- (A) -- (B) -- (A) -- (D);

\node (E) at (6.2,3.3) [circle,shade,draw] {};
\node (F) at (5.8,2.7) [circle,shade,draw] {};

\node (G) at (4.96,3) {$\cdots$};
\node (H) at (4.96,1.5) {$\cdots$};
\node (I) at (4.96,0) {$\cdots$};

\end{tikzpicture}
\caption{a) Network with a tree structure, such that the subtrees to $\mathcal{H}_{A}$ and $\mathcal{H}_{B}$ are connected by a single link. b) Node $v$ with two neighbors $u_1$ and $u_2$, which have fixed local propagation by $\mathbb{I}$. The blue and black dashed lines compare a nonlocal propagation with a modification to a local propagation.}
\label{FIGiterationargument}
\end{figure}
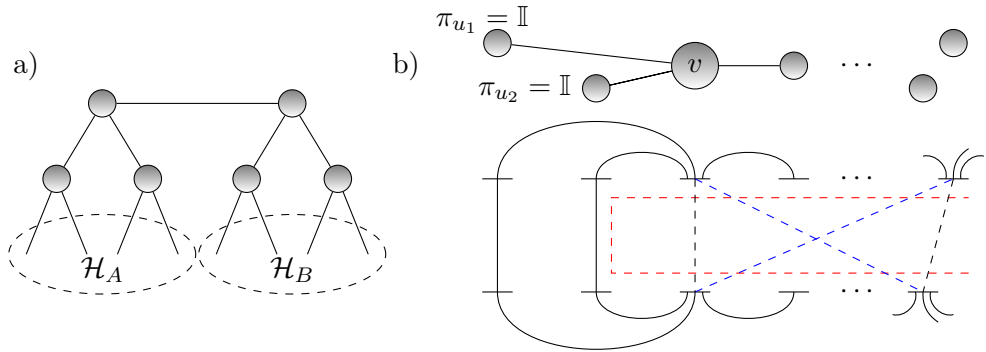


The statements concerning the unique maximum of the face number in the free theory motivate the search for maximal divergent diagrams in the class of local processes. This class corresponds to the statistics of independent vertex averaging, where we separately average the density $(\rho^{(v)})^{\otimes d}$ at each network vertex $v\in V$:

\begin{align}
\mathbb{E}[Z_{0/A}^{(N)}]\approx \tr_{E\cup\partial \Gamma} \left[ \bigotimes\limits_{e\in E\cup \partial \Gamma} (\rho^{(e)})^{\otimes N} \bigotimes\limits_{v \in V} \mathbb{E}[(\rho^{(v)})^{\otimes N}]\bigotimes\limits_{w \in B} \mathbb{I} \bigotimes\limits_{w \in A} \mathbb{I}/\mathbb{F} \right]
\end{align} 

This finite but by expectation dominant sector of possible diagrams corresponds to a different interpretation of the variables (\ref{expectation}) as composed of independent but identically distributed random fields $\phi^{(v)}$. The corresponding Feynman diagrams decompose into different subdiagrams to each vertex $v$ of the network, which are contracted by the network pattern sketched in figure \ref{FIGfullvariable}. At each vertex, we index the diagrams by an element $\pi$ of the permutation group $\mathcal{S}_N$:
\begin{align}
\mathbb{E}[(\rho^{(v)})^{\otimes N}]=\sum_{\pi \in \mathcal{S}_N} \mathbb{P}(\pi)+\mathcal{O}(\lambda) 
\label{nodeaverage}
\end{align}
The sum over the permutation group $\mathcal{S}_N$ captures all processes of the corresponding free theory ($\lambda=0$), where we define for each permutation $\pi$ an operator $\mathbb{P}(\pi)$ modelling the propagation (\ref{choiceK}) of the $k$th incoming node copy to the $\pi(k)$th outgoing. Perturbation of the free theory would give rise to diagrams with interactions, which are controled by the perturbation parameter $\lambda$.

\subsection{Maximal Divergent Diagrams for Locally Averaged Networks}
\label{SECfreelocal}

Insights into the structure of dominant patterns can be observed in the local node averages, where we first restrict to the case $\lambda=0$, therefore to the sum term in (\ref{nodeaverage}). Each diagram $\mathcal{G}$ contributing to one of the averages $\mathbb{E}[Z_{A/0}^{(N)}]$ corresponds in this sector to a choice of a permutation $\pi_v\in \mathcal{S}_N$ at each network vertex $v\in V$. 

For such permutation pattern we can categorize each face of a diagram by the unique network link $e\in E\cup \partial \Gamma$ they include, which turns the number of faces $\widetilde{\Omega}[\mathcal{G}]$ into a sum of all link contributions. Each closed strand to a link $e=(e_1,e_2)$ corresponds to a cycle in the permutation $\pi_{e_1}^{-1} \circ \pi_{e_2} \in \mathcal{S}_N $, where the difference of the cycle number $\chi(\pi_1^{-1} \circ \pi_2)$ to the maximum $N$ defines a metric $d(\pi_1,\pi_2)$ in the permutation group $\mathcal{S}_N$. The face number for a local permutation pattern $\{\pi_v\}$ as a sum of the cycle numbers along each link is thus
\begin{align}
\widetilde{\Omega}[\{\pi_v\}]=\sum_{(e_1,e_2)\in E\cup \partial \Gamma} \chi(\pi_{e_1}^{-1} \circ \pi_{e_2})=\sum_{(e_1,e_2)\in E\cup \partial \Gamma} (N-d(\pi_{e_1},\pi_{e_2}))
\end{align}

Within our aim to identify permutation pattern maximizing the divergence degree $\Omega$, we first identify those maximizing the face number $\widetilde{\Omega}$. While for $Z_0^{(N)}$ we have already found the single maximum by $\{\mathbb{I}\}$ with theorem \ref{THEconnected}, we will now build on theorem \ref{THEtreenetwork} to find the maximal face number in the situation $Z_A^{(N)}$.
 
\begin{theorem}
\label{THfaces}
Let us assume a connected network graph $\Gamma=(V,E\cup \partial\Gamma)$ and a partition $A\cup B=\partial\Gamma$. Let $\sigma_{\text{min}}\subset E$ be a link set with minimal cardinality, such that by omission of $\sigma_{\text{min}}$ the graph $\Gamma$ reduces to two connected components, the first including the boundary $A$ and the second $B$ (see figure \ref{FIGpaths}). It then holds: \\
i) A pattern maximizing the number $\widetilde{\Omega}[\{\pi_v\}]$ of faces in the situation of $Z_A^{(N)}$ is the association of $\mathbb{I}$ and $\mathbb{F}$ to the components connected to $A$ and $B$ after omission of the links $\sigma_{\text{min}}$ (figure \ref{FIGmaxfree}b). \\
ii) If $\sigma_{\text{min}}$ is unique, $\widetilde{\Omega}[\{\pi_v\}]$ has only one maximum.
\end{theorem}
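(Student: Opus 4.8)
The plan is to recast the maximization of the face number as a minimal-cut problem on $\Gamma$. Writing the face number in the form
\begin{align}
\widetilde{\Omega}[\{\pi_v\}]=N\,|E\cup\partial\Gamma|-\sum_{(e_1,e_2)\in E\cup\partial\Gamma} d(\pi_{e_1},\pi_{e_2}),
\end{align}
where $d(\pi_1,\pi_2)=N-\chi(\pi_1^{-1}\circ\pi_2)$ is the Cayley (transposition) metric on $\mathcal{S}_N$, maximizing $\widetilde{\Omega}$ is equivalent to minimizing the total cost $C[\{\pi_v\}]:=\sum_{\text{links}}d(\pi_{e_1},\pi_{e_2})$ subject to the fixed boundary labels $\mathbb{F}$ on $A$ and $\mathbb{I}$ on $B$. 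The single piece of metric data feeding the boundary conditions is $d(\mathbb{I},\mathbb{F})=N-\chi(\mathbb{F})=N-1$, since $\mathbb{F}$ is one $N$-cycle.

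For part i) I would first establish the lower bound $C[\{\pi_v\}]\ge (N-1)\,|\sigma_{\text{min}}|$. Along any path in $\Gamma$ joining a leg of $A$ to a leg of $B$, the triangle inequality for $d$ gives $\sum_{\text{links on path}}d(\pi_{e_1},\pi_{e_2})\ge d(\mathbb{F},\mathbb{I})=N-1$. By the max-flow--min-cut theorem (Menger), $\Gamma$ carries exactly $|\sigma_{\text{min}}|$ edge-disjoint $A$--$B$ paths, because $\sigma_{\text{min}}$ is by hypothesis a minimal separating link set; summing the per-path bound over these disjoint paths yields $C\ge (N-1)|\sigma_{\text{min}}|$. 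The domain-wall assignment---give every node the boundary value of the component to which it belongs after deleting $\sigma_{\text{min}}$---saturates the bound, since all links internal to a component are monochromatic and cost $0$, while each of the $|\sigma_{\text{min}}|$ cut links carries the pair $(\mathbb{I},\mathbb{F})$ and costs $N-1$. Hence this pattern maximizes $\widetilde{\Omega}$, proving i).

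For part ii) the task is to show this minimizer is the \emph{only} one when $\sigma_{\text{min}}$ is unique. Restricting to the \emph{pure} patterns, with each node labelled by one of the two boundary values $\{\mathbb{I},\mathbb{F}\}$, the cost collapses to $C=(N-1)\cdot\#\{\text{bichromatic links}\}$, so a minimizer is exactly a minimum $A$--$B$ edge cut. The bichromatic links of such a colouring form a separating set of cardinality $|\sigma_{\text{min}}|$; by uniqueness of $\sigma_{\text{min}}$ this set must equal $\sigma_{\text{min}}$, and deleting it leaves precisely two components on each of which the label is forced to be constant and equal to the adjacent boundary value. This fixes the colouring completely, giving a unique maximizer.

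The main obstacle is exactly the reduction invoked at the start of ii): ruling out that a maximizer profits from an intermediate permutation lying strictly on a $d$-geodesic between $\mathbb{I}$ and $\mathbb{F}$ (for instance a transposition when $N\ge 3$), which a priori can tie the domain-wall cost along a single path and thereby threaten uniqueness. I expect to control this by an exchange argument in the spirit of the proof of Theorem \ref{THEtreenetwork}: starting from the boundary and moving inward, one replaces any such intermediate label by the nearer boundary value and shows, using the tightness of the disjoint-path decomposition forced by the uniqueness of $\sigma_{\text{min}}$, that $\widetilde{\Omega}$ does not decrease and that strict purity is enforced everywhere except across the single cut. Carrying out this step carefully, rather than the min-cut bound itself, is where the genuine work lies, and is the point at which the uniqueness hypothesis on $\sigma_{\text{min}}$ is essential.
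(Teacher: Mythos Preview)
Your argument for i) is correct and matches the paper's: both bound the cost from below via the triangle inequality summed along $|\sigma_{\text{min}}|$ edge-disjoint $A$--$B$ paths supplied by max-flow--min-cut, and both verify that the domain-wall assignment saturates this bound.

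For ii), the reduction to pure $\{\mathbb{I},\mathbb{F}\}$ patterns is also what the paper does, and you correctly isolate the remaining obstruction: intermediate symbols lying on a Cayley geodesic from $\mathbb{F}$ to $\mathbb{I}$. However, the paper dispatches this far more directly than by a boundary-inward exchange, and the idea you are missing is a \emph{global} replacement. Observe that a maximizer must saturate your two inequalities separately: equality in the first (dropping the links outside the disjoint paths) forces every off-path link to be monochromatic; equality in the second (triangle inequality along each $P_k$) forces the symbol sequence along each path to be a geodesic, hence of the shape $\mathbb{F},\ldots,\mathbb{F},(\text{intermediates}),\mathbb{I},\ldots,\mathbb{I}$. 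Now replace \emph{all} intermediate symbols by $\mathbb{I}$ simultaneously: off-path links stay monochromatic, and along each path the cost remains exactly $N-1$, so $\widetilde{\Omega}$ is unchanged and the resulting pure pattern is again a maximizer. Replacing instead by $\mathbb{F}$ yields a second pure maximizer. If the original pattern actually used an intermediate symbol these two pure patterns differ, and their bichromatic link sets are two distinct minimum $A$--$B$ cuts, contradicting the uniqueness of $\sigma_{\text{min}}$. This two-sided global replacement is the missing step; the inductive exchange you sketch is not needed and, on a non-tree network, would be considerably harder to control than this direct contradiction.
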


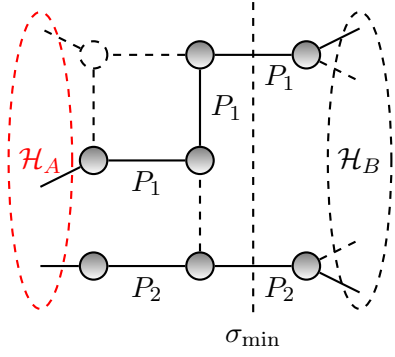
\begin{figure}
\centering
\begin{tikzpicture}[thick,scale=0.7]
\node (A1) at (0,0) [circle,shade,draw] {};
\node (A2) at (0,2) [circle,shade,draw] {};
\node (A3) at (-2,2) [circle,draw,dashed] {};
\node (A4) at (-2,0) [circle,shade,draw] {};
\node (A5) at (-2,-2) [circle,shade,draw] {};
\node (A6) at (0,-2) [circle,shade,draw] {};
\node (A7) at (2,-2) [circle,shade,draw] {};
\node (A8) at (2,2) [circle,shade,draw] {};
\draw[-] (A1) -- (A2) node[midway,right]{$P_1$};
\draw[dashed] (A2)--(A3)--(A4);
\draw[dashed] (A6)--(A1);
\draw[-] (A4) -- (A1) node[midway,below]{$P_1$};
\draw[-] (A6) -- (A5) node[midway,below]{$P_2$};
\draw[-] (A6) -- (A7);
\node (sledz) at (1.5,-2) [below]{$P_2$}; 
\draw[-] (A2) -- (A8);
\node (sledz2) at (1.5,2) [below]{$P_1$};
\draw[dashed] (A3) -- (-3,2.5); 
\draw[-] (A4) -- (-3,-0.5); 
\draw[-] (A5) -- (-3,-2); 
\draw[-] (A8) -- (3,2.5); 
\draw[-] (A7) -- (3,-2.5);
\draw[dashed] (A8) -- (3,1.5); 
\draw[dashed] (A7) -- (3,-1.5);
\draw[dashed] (1,3) -- (1,-3) node[below]{$\sigma_{\text{min}}$};
\draw[dashed,red] (-3,0) ellipse (17pt and 80pt) node {$\mathcal{H}_{A}$}; 
\draw[dashed] (3,0) ellipse (17pt and 80pt) node {$\mathcal{H}_{B}$};  

\end{tikzpicture}
\caption{Disjoint paths through a network graph $\Gamma$, such that they start in a region $A$ and end in a region $B$ of open links. As a correllary of the maximal-flow-minimal-cut theorem \cite{FordFulkerson,EliasFeinstein} the maximal number of such paths is $|\sigma_{\text{min}}|$, where $\sigma_{\text{min}}$ is a minimal set of links separating $\Gamma$ according to the chosen partition of open links.}
\label{FIGpaths}
\end{figure}

\begin{proof}
By a corollary of the maximal-flow-minimal-cut theorem \cite{FordFulkerson,EliasFeinstein}, we find a number of $|\sigma_{\text{min}}|$ disjoint paths $P_k$ through the network starting with a link in $A$ and ending in $B$ \cite{Hayden} (figure \ref{FIGpaths}). Taking just the links included in the paths into account we estimate with the triangle inequality of the metric $d$:
\begin{align}
\label{ineq1}
\widetilde{\Omega}[\{\pi_v\}]-N|E\cup\partial\Gamma|& \leq -\sum_{k=1}^{|\sigma_{\text{min}}|}\sum_{(e_1,e_2)\in P_k} d(\pi_{e_1},\pi_{e_2}) \\
& \leq -|\sigma_{\text{min}}|(N-1)
\label{ineq2}
\end{align}
Associating the trivial permutation $\mathbb{I}$ with the nodes connected to $A$ after omission of $\sigma_{\text{min}}$ and $\mathbb{F}$ to the other nodes results in a number $N|E\cup\partial\Gamma|-|\sigma_{\text{min}}|(N-1)$ of faces and is with (\ref{ineq2}) maximal, which shows i).\\
Let us assume a different pattern $\{\pi_v\}$ maximizing $\widetilde{\Omega}$. If it contains just the symbols $\mathbb{I}$ and $\mathbb{F}$, both regions would be separated by a different minimal set $\sigma_{\text{min}}$, if (\ref{ineq1}) and (\ref{ineq2}) hold straight. If other symbols are included, we set all further symbols to $\mathbb{I}$, which does not change $\widetilde{\Omega}$, if both inequalities hold straight. The resulting pattern thus also maximizes $\widetilde{\Omega}$, where the regions are separated by a minimal set $\sigma_{\text{min}}$. Setting the symbols instead to $\mathbb{F}$ would result in another minimal set $\sigma_{\text{min}}$. If $\sigma_{\text{min}}$ is unique, the pattern i) maximizing $\widetilde{\Omega}$ is also unique.
\end{proof}

\begin{figure}
\centering
\begin{tikzpicture}[thick,scale=0.7]
\node (n) at (-4,2) {a)};
\node (A1) at (0,0) [circle,shade,draw] {};
\node (A2) at (0,2) [circle,shade,draw] {};
\node (A3) at (-2,2) [circle,shade,draw] {};
\node (A4) at (-2,0) [circle,shade,draw] {};
\node (A5) at (-2,-2) [circle,shade,draw] {};
\node (A6) at (0,-2) [circle,shade,draw] {};
\node (A7) at (2,-2) [circle,shade,draw] {};
\node (A8) at (2,2) [circle,shade,draw] {};
\draw[-] (A1) -- (A2)-- (A3)-- (A4) -- (A1) -- (A6) -- (A5);
\draw[-] (A6) -- (A7); 
\draw[-] (A2) -- (A8);
\draw[-] (A3) -- (-3,2.5); 
\draw[-] (A4) -- (-3,-0.5); 
\draw[-] (A5) -- (-3,-2); 
\draw[-] (A8) -- (3,2.5); 
\draw[-] (A7) -- (3,-2.5);
\draw[-] (A8) -- (3,1.5); 
\draw[-] (A7) -- (3,-1.5);
\node (C) at (-2.75,-2.5){};
\node (B) at (2.75,-2.5){};
 \node (X) at (0,-3.25) {$\mathbb{I}$};
 \draw [thick,decorate,decoration={brace,amplitude=5pt}] (B) -- (C) node[midway,right,xshift=12pt,]{}; 
\draw[dashed] (-3,0) ellipse (17pt and 80pt) node {$\mathcal{H}_{A}$}; 
\draw[dashed] (3,0) ellipse (17pt and 80pt) node {$\mathcal{H}_{B}$};  

\begin{scope}[shift={(9,0)}]
\node (n) at (-4,2) {b)};
\node (A1) at (0,0) [circle,shade,draw] {};
\node (A2) at (0,2) [circle,shade,draw] {};
\node (A3) at (-2,2) [circle,shade,draw] {};
\node (A4) at (-2,0) [circle,shade,draw] {};
\node (A5) at (-2,-2) [circle,shade,draw] {};
\node (A6) at (0,-2) [circle,shade,draw] {};
\node (A7) at (2,-2) [circle,shade,draw] {};
\node (A8) at (2,2) [circle,shade,draw] {};
\draw[-] (A1) -- (A2)-- (A3)-- (A4) -- (A1) -- (A6) -- (A5);
\draw[-] (A6) -- (A7); 
\draw[-] (A2) -- (A8);
\draw[-] (A3) -- (-3,2.5); 
\draw[-] (A4) -- (-3,-0.5); 
\draw[-] (A5) -- (-3,-2); 
\draw[-] (A8) -- (3,2.5); 
\draw[-] (A7) -- (3,-2.5);
\draw[-] (A8) -- (3,1.5); 
\draw[-] (A7) -- (3,-1.5);
\draw[dashed] (1,3) -- (1,-3) node[below]{$\sigma_{\text{min}}$};
\node (X) at (-1,-3.25) {$\mathbb{F}$};
\node (C) at (-2.75,-2.5){};
\node (B) at (0.75,-2.5){};
 \draw [thick,decorate,decoration={brace,amplitude=5pt}] (B) -- (C) node[midway,right,xshift=12pt,]{}; 
 \node (X) at (2,-3.25) {$\mathbb{I}$};
\node (C) at (1.25,-2.5){};
\node (B) at (2.75,-2.5){};
 \draw [thick,decorate,decoration={brace,amplitude=5pt}] (B) -- (C) node[midway,right,xshift=12pt,]{}; 
\draw[dashed,red] (-3,0) ellipse (17pt and 80pt) node {$\mathcal{H}_{A}$}; 
\draw[dashed] (3,0) ellipse (17pt and 80pt) node {$\mathcal{H}_{B}$};  

\end{scope}
\end{tikzpicture}
\caption{Permutation pattern with maximal face number $\widetilde{\Omega}$ (theorem \ref{THfaces}) and divergence degree (theorem \ref{THdivergence}). a) In the boundary situation of $Z_0^{(N)}$, the maxima are achieved by association of the trivial permutation $\mathbb{I}$ and b) in the situation of $Z_A^{(N)}$ by association of $\mathbb{I}$ and $\mathbb{F}$ to regions separated by $\sigma_{\text{min}}$.}
\label{FIGmaxfree}
\end{figure}
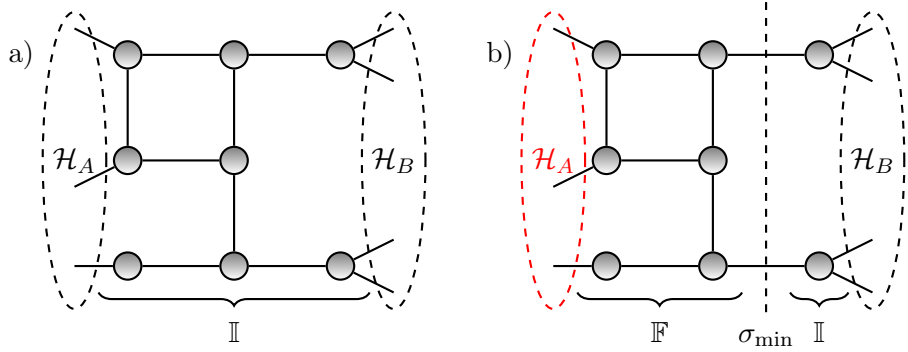

The divergence degree $\Omega[\mathcal{G}]$ of the amplitude (\ref{amplitude}) is however smaller than the face number $\widetilde{\Omega}$ due to the evaluation of the gauge integrals with parameters $h_e$ associated with each propagation. However, we can use certain bounds oriented on the face number $\widetilde{\Omega}$ to find maximal divergent patterns. We proceed by first introducing the network property of reducibility to a tree via coarse-graining, which will then be exploited to prove the uniqueness of the divergence degree maxima.

\begin{definition}
Let $\Gamma=(V,E)$ be a graph with a disjoint partition $\bigcup_m V_m = V$ of its nodes into regions $m$. The corresponding \textbf{coarse-grained graph} $\Gamma^{\{V_m\}}$ (figure \ref{FIGregion}) consists of the regions $m$ as nodes and a number $E_{m\tilde{m}}$ of links between regions $m$ and $\tilde{m}$ given by:
\begin{align}
E_{m\tilde{m}}=\#\{(e_1,e_2)\in E \,|\, e_1\in V_m,e_2\in V_{\tilde{m}}\}
\end{align}
$\Gamma$ is called \textbf{coarse-grainable to a tree}, if there exists a disjoint partition $\bigcup_m V_m= V$ of its nodes such that for all $m$ it holds $V_m\neq V$ and the corresponding coarse-grained graph $\Gamma^{\{V_m\}}$ is at most minimal connected.
\label{DEFcoarsegrainable}
\end{definition}

Network graphs $\Gamma$, however, have open links, which we will close by adding single-valent nodes on their ends. The definition of coarse-graining is thus extended to open graphs by partition of the virtual boundary nodes $\partial \Gamma$ together with the nodes $V$.

\begin{theorem}
\label{THdivergence}
With the same conditions on the network graph $\Gamma$ as in theorem \ref{THfaces} it holds:\\
i) A pattern maximizing the divergence degree $\Omega[\{\pi_v\}]$ in the boundary situation of $Z_0^{(N)}$ is the association of $\mathbb{I}$ to all nodes (figure \ref{FIGmaxfree}a).\\
ii) A pattern maximizing the divergence degree $\Omega[\{\pi_v\}]$ in the boundary situation of $Z_A^{(N)}$ is the association of $\mathbb{I}$ and $\mathbb{F}$ to the separated regions connected to $A$ and $B$ after omission of the links $\sigma_{\text{min}}$ (figure \ref{FIGmaxfree}b). \\
iii) The number of maximal divergent pattern is the same for both boundary situations $Z_0^{(N)}$ and $Z_A^{(N)}$. If $\Gamma$ is not coarse-grainable to a tree, the maxima i) and ii) are unique.
\end{theorem}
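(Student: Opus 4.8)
The plan is to reduce the statement about the divergence degree $\Omega$ to the already-established optimality of the face number $\widetilde\Omega$ in Theorem \ref{THfaces}, while carefully accounting for the gauge integrations in the amplitude \eqref{amplitude}. Writing that amplitude as an integral over the edge holonomies of a product of face delta functions, I would first record the power-counting identity
\begin{align}
\Omega[\{\pi_v\}]=\widetilde\Omega[\{\pi_v\}]-r[\{\pi_v\}],
\end{align}
where $r[\{\pi_v\}]$ is the rank of the face--edge incidence system, i.e.\ the number of face constraints genuinely fixed by integrating the holonomies $h_l$; equivalently $\Omega$ counts the independent relations among the face boundaries. The task then becomes to maximise $\widetilde\Omega-r$ rather than $\widetilde\Omega$ alone, which is why the divergence maxima need not coincide naively with the face-number maxima.

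For parts i) and ii) I would proceed by an upper bound followed by saturation. The bound $\widetilde\Omega\le\widetilde\Omega_{\max}$ is Theorem \ref{THfaces}; the point is to show that the canonical patterns --- all $\mathbb{I}$ for $Z_0^{(N)}$ and the $\mathbb{I}/\mathbb{F}$ split across $\sigma_{\text{min}}$ for $Z_A^{(N)}$ --- also realise the smallest possible value of $r$, so that they saturate the maximum of $\Omega$. The key observation is that whenever a connected region of nodes carries one and the same permutation, the holonomies attached to its internal propagations can be absorbed by a change of integration variables, acting as pure gauge on the strands that close inside the region, and hence contribute trivially to $r$; only the propagations straddling the $\mathbb{I}/\mathbb{F}$ interface, i.e.\ the links in $\sigma_{\text{min}}$, produce nontrivial constraints. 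Computing $r$ for the canonical pattern in this way and comparing with the upper bound shows that it maximises $\Omega$, giving i) and ii).

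For part iii), the mechanism that produces extra maxima is exactly a coarse-graining to a tree in the sense of Definition \ref{DEFcoarsegrainable}. I would analyse how $\Omega$ changes when the canonical pattern is modified by assigning constant but distinct permutations to the pieces $V_m$ of a partition: across each inter-region link the face number $\widetilde\Omega$ drops by the permutation distance, while the associated holonomy may be used to eliminate one constraint, lowering $r$ by the same amount, provided that link does not close a loop in the coarse-grained graph $\Gamma^{\{V_m\}}$. Hence if $\Gamma^{\{V_m\}}$ is a tree the two effects cancel and $\Omega$ is unchanged, generating a whole family of maxima; if instead $\Gamma$ admits no such tree coarse-graining, any cycle among the regions forces an over-determined holonomy constraint, so that $r$ falls short of the compensation and $\Omega$ strictly decreases, leaving only the canonical pattern. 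The same counting applies verbatim to both boundary conditions once one identifies a maximal pattern for $Z_0^{(N)}$ with the pattern for $Z_A^{(N)}$ obtained by flipping the $B$-component to $\mathbb{F}$, which yields the asserted equality of the number of maxima.

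The main obstacle I anticipate is making the cancellation in part iii) precise: I must show rigorously that along a tree-like inter-region link a single holonomy integration removes exactly one face constraint, so that $\Delta\Omega=\Delta\widetilde\Omega-\Delta r=0$, whereas any cycle in the coarse graph leaves a residual, non-absorbable constraint, so that $\Delta\Omega<0$. This is essentially a cycle-space computation on the stranded diagram, and controlling the rank $r[\{\pi_v\}]$ uniformly over \emph{all} deviating patterns --- not merely for the canonical one --- is the delicate step on which the uniqueness claim ultimately rests.
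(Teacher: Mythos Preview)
Your overall strategy---relating $\Omega$ to $\widetilde{\Omega}$ through the gauge integrations and then invoking coarse-graining for uniqueness---is aligned with the paper. But there is a genuine error in your argument for i) and ii), and it makes the proposal internally inconsistent. You claim the canonical patterns realise the \emph{smallest} possible rank $r$. This is backward: the all-$\mathbb{I}$ pattern has $r=N|V|$, the \emph{maximum} (there are exactly $N|V|$ holonomies, and each fixes an independent face along a spanning forest to the boundary). Deformed patterns have fewer faces on the tree links, hence fewer absorbed constraints, so $r$ \emph{decreases}---which is precisely what you yourself say in part iii) (``lowering $r$ by the same amount''). Thus your i)--ii) and iii) contradict each other, and the separate-optimisation argument ``$\widetilde{\Omega}$ maximal, $r$ minimal $\Rightarrow$ $\Omega$ maximal'' collapses.

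The paper sidesteps computing $r$ altogether. It picks a spanning forest $T$ with $|T|=|V|$ connecting every node to $\partial\Gamma$, integrates the holonomies along $T$, and obtains the \emph{pattern-dependent} upper bound
\[
\Omega[\{\pi_v\}]\;\le\;\widetilde{\Omega}[\{\pi_v\}]-\sum_{(t_1,t_2)\in T}\bigl(N-d(\pi_{t_1},\pi_{t_2})\bigr)
= N\bigl[|E\cup\partial\Gamma|-|V|\bigr]-\!\!\sum_{(m,\tilde m)\in\Gamma^{\{V_m\}}}\!\!E_{m\tilde m}\,d(\pi_m,\pi_{\tilde m})+\!\!\sum_{(m,\tilde m)\in T^{\{V_m\}}}\!\!d(\pi_m,\pi_{\tilde m}),
\]
after grouping nodes into constant-permutation regions $V_m$. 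The last two sums are jointly nonpositive since $E_{m\tilde m}\ge 1$ on $T^{\{V_m\}}$; the canonical pattern is checked directly to saturate the bound $N[|E\cup\partial\Gamma|-|V|]$ (with the extra $-(N-1)|\sigma_{\text{min}}|$ for $Z_A^{(N)}$). Uniqueness then amounts to asking when those two sums cancel, which forces $E_{m\tilde m}=1$ along the coarse-grained tree---exactly the coarse-grainable-to-a-tree condition. Your part iii) intuition lands on the same conclusion, but the route is this single inequality rather than any control of $r$.
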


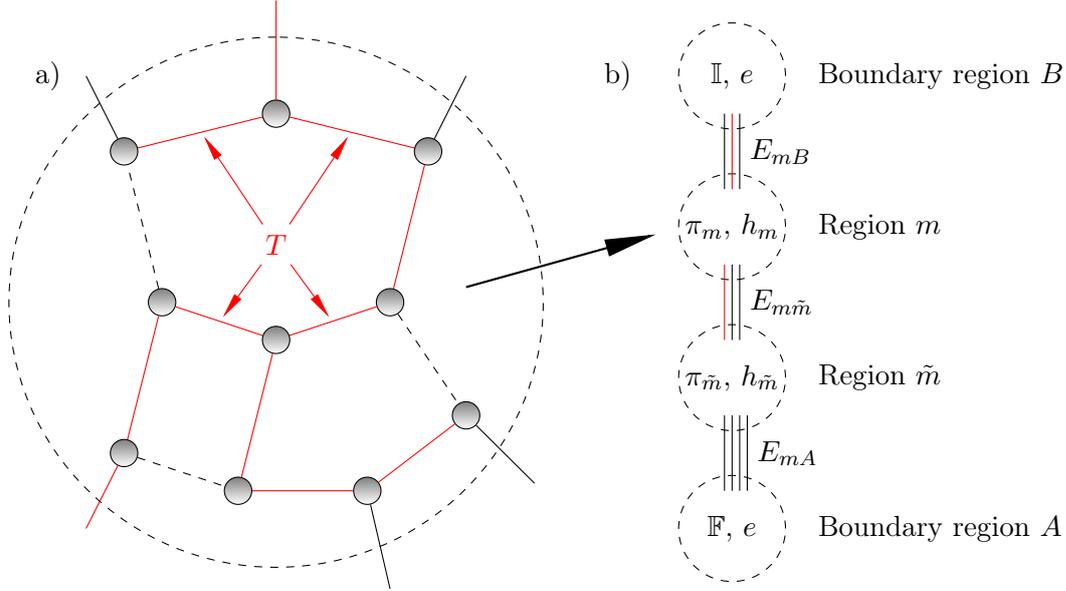
\begin{figure}
\begin{tikzpicture}
\begin{scope} [rotate=90]
\draw[dashed] (0,0) ellipse (100pt and 100pt);
\node (A1) at (-0.5,0) [circle,shade,draw] {};
\node (A2) at (0,1.5) [circle,shade,draw] {};
\node (A3) at (-2,2) [circle,shade,draw] {};
\node (A4) at (-2.5,0.5) [circle,shade,draw] {};
\node (A5) at (-1.5,-2.5) [circle,shade,draw] {};
\node (A6) at (0,-1.5) [circle,shade,draw] {};
\node (A7) at (2,-2) [circle,shade,draw] {};
\node (A8) at (2,2) [circle,shade,draw] {};
\node (A9) at (2.5,0) [circle,shade,draw] {};
\node (A10) at (-2.5,-1.2) [circle,shade,draw] {};
\draw[-,red] (A5)--(A10)--(A4)--(A1) -- (A2)-- (A3);
\draw[dashed] (A3) -- (A4);
\draw[-,red] (A1)--(A6) -- (A7) -- (A9) -- (A8); 
\draw[dashed] (A2) -- (A8);
\draw[dashed]  (A6) -- (A5);
\draw[red] (A3) -- (-3,2.5); 
\draw[-] (A5) -- (-2.4,-3.4); 
\draw[-] (A8) -- (3,2.5);  
\draw[-] (A7) -- (3,-2.5);
\draw[red] (A9) -- (4,0);
\draw[-] (A10) -- (-3.8,-1.5);

\node (MST) at (0.5,0)[above,red] {$T$};
\draw[->,red] (MST) -- (-0.2,0.7);
\draw[->,red] (MST) -- (-0.2,-0.7);
\draw[->,red] (MST) -- (2.2,-0.95);
\draw[->,red] (MST) -- (2.2,0.95);
\end{scope}
\draw[->, line width=0.8pt] (2.5,0.2) -- (5,0.9);

\begin{scope}[shift={(6,1)}]
\draw[dashed] (0,0) ellipse (20pt and 20pt) node[]{$\pi_m$, $h_m$};
\node (A) at (1,0)[right] {Region $m$};
\draw[dashed] (0,-4) ellipse (20pt and 20pt) node[]{$\mathbb{F}$, $e$};
\node (A) at (1,-4)[right] {Boundary region $A$};
\draw[dashed] (0,-2) ellipse (20pt and 20pt) node[]{$\pi_{\tilde{m}}$, $h_{\tilde{m}}$};
\node (A) at (1,-2)[right] {Region $\tilde{m}$};
\draw[dashed] (0,2) ellipse (20pt and 20pt) node[]{$\mathbb{I}$, $e$};
\node (A) at (1,2)[right] {Boundary region $B$};

\draw[red] (0,0.5)--(0,1.5);
\draw[-] (0.1,0.5)--(0.1,1.5)node [midway,right]{$E_{mB}$};
\draw[-] (-0.1,0.5)--(-0.1,1.5);
\draw[-] (0,-1.5)--(0,-0.5);
\draw[-] (0.1,-1.5)--(0.1,-0.5)node [midway,right]{$E_{m\tilde{m}}$};
\draw[red] (-0.1,-1.5)--(-0.1,-0.5);
\draw[-] (0,-3.5)--(0,-2.5);
\draw[-] (0.1,-3.5)--(0.1,-2.5);
\draw[-] (0.2,-3.5)--(0.2,-2.5)node [midway,right]{$E_{mA}$};
\draw[-] (-0.1,-3.5)--(-0.1,-2.5);
\end{scope}
\node (X) at (-3,3) {a)};
\node (Y) at (4.5,3) {b)};
\end{tikzpicture}
\caption{Coarse-graining of a permutation pattern by grouping neighbored nodes with same permutation symbol to one region a). A minimal graph $T$ connects all nodes to the boundary regions and reduces to $T^{\{V_m\}}$ in the coarse-grained graph $\Gamma^{\{V_m\}}$, sketched in b).}
\label{FIGregion}
\end{figure}

\begin{proof}
Let us find a minimal subgraph $T$ of the network graph $\Gamma$, which includes all nodes $V$ and to each node a path to a boundary node. Thus $T$ is a forest with leaves in the boundary and one can iteratively perform the integrals associated with its links in the amplitude (\ref{amplitude}). Starting from the links in the boundary $A$ and $B$ we find to each link $t\in T$ a different incident node $v_t$, which carries gauge parameters $h_{t_i}$. By integration with respect to the gauge parameter of this node, the delta functions associated with each face $f$ categorized by a link in $T$ vanishes. We denote the stranded subdiagram of $\mathcal{G}$, which is spanned by faces to the subgraph $T$, by $\mathcal{T}$ and by $\mathcal{G}/\mathcal{T}$ the diagram after omission of the faces to $t\in T$ as well as the evaluated gauge parameters $h_{t_i}$. It then holds:
\begin{align}
\int [\prod\limits_{l\in \mathcal{G}} dh_l] \prod\limits_{f \in \mathcal{G}} \delta(\prod_{l\in \partial f}h_l) &=\int [\prod\limits_{l\in \mathcal{G}/\mathcal{T}} dh_l][\prod\limits_{t\in T}\prod\limits_{i} dh_{t_i}] \prod\limits_{f \in \mathcal{G}/T} \delta(\prod_{l\in \partial f}h_l) \prod\limits_{f \in T} \delta(\prod_{l\in \partial f}h_l)
\\
&=\int [\prod\limits_{l\in \mathcal{G}/T} dh_l] \prod\limits_{f \in \mathcal{G}/T} \delta(\prod_{l\in \partial f}h_l)
\label{reducedamplitude}
\end{align}
After the integration procedure along the minimal subgraph $T$ we are left with a subdiagram $\mathcal{G}/\mathcal{T}$, which has a reduced face number $\widetilde{\Omega}[\mathcal{G}/\mathcal{T}]$. We use this face number as an upper bound of the divergence degree $\Omega[\mathcal{G}]$: 
\begin{align}
\Omega[\mathcal{G}]\, \leq \, \widetilde{\Omega}[\mathcal{G}/\mathcal{T}]=\widetilde{\Omega}[\mathcal{G}]-\sum\limits_{t=(t_1,t_2)\in T} (N-d(\pi_{t_1},\pi_{t_2}))
\label{divergenceestimation}
\end{align}
We rewrite (\ref{divergenceestimation}) by grouping neighbored vertices with the same permutation symbol to regions $V_m$. This coarse-graining procedure results in a graph $\Gamma^{\{V_m\}}$ (figure \ref{FIGregion}), where regions $m$ denote vertices and $E_{m\tilde{m}}$ notes the number of links between two regions. The minimal subgraph $T$ coarse-grains to $T^{\{V_m\}}$ by omitting links between nodes of the same region. With $d(\pi_{\pi_{v_1}},\pi_{v_2})=0$ between vertices in the same regions, we get:
\begin{align}
\label{divergenceestimation2.0}
\Omega[\mathcal{G}] \, & \leq  \, \widetilde{\Omega}[\mathcal{G}]-\sum\limits_{t\in T} (N-d(\pi_{t_1},\pi_{t_2})) - N|V|+ \sum_{(m,\tilde{m})\in T^{\{V_m\}}} d(\pi_m,\pi_{\tilde{m}})\\
&= N[|E\cup \partial \Gamma|-|V|] - \sum\limits_{(m,\tilde{m})\in \Gamma^{\{V_m\}}} E_{m\tilde{m}}d(\pi_m,\pi_{\tilde{m}})  + \sum\limits_{(m,\tilde{m})\in T^{\{V_m\}}} d(\pi_m,\pi_{\tilde{m}})
\label{divergenceestimation2}
\end{align}
While the maximal values of the second term in (\ref{divergenceestimation2}) have been identified in theorem \ref{THfaces}, we will optimize it here in combination with the third term in order to get a maximal upper bound of $\widetilde{\Omega}$ for a fixed network graph $\Gamma$. 
Since the second sum is always bigger than the third, both terms taken together are smaller or equal to zero. In the boundary situation of $Z_0^{(N)}$, this maximal bound is saturated for the pattern $\{\pi_v\}=\{\mathbb{I}\}$ with a divergence degree $\Omega[\{\mathbb{I}\}]=N[|E\cup \partial \Gamma|-|V|]$. Further maximal divergent processes are possible only if the second and third term vanish together and the inequality (\ref{divergenceestimation2}) remains straight. But this would imply a network graph which is coarse-grainable to a tree. The discussion of such graphs is the subject of Section \ref{APPmultipledivergence}. 

The different situation of $Z_A^{(N)}$ results in a sharper bound for the face number $\widetilde{\Omega}$. Under the assumption of a unique minimal surface $\sigma_{\text{min}}$, the second term is with theorem \ref{THfaces} at most $|\sigma_{\text{min}}|(1-N)$, which is reached in a situation of vanishing $T^{\{V_m\}}$. If $T^{\{V_m\}}$ does not vanish, the decrease of the second term needs to be compensated by the third term, where we observe a direct correspondence to other maximal cases in the situation of $Z_0^{(N)}$ in Section \ref{APPmultipledivergence}. In both boundary situations, there is thus the same number of $c$ different maximal divergent pattern.
\end{proof}

We note that unlike the discussed maxima of the face number $\widetilde{\Omega}$, the maximum of the divergence degree $\Omega$ is not unique and nontrivial examples are given in Section \ref{APPmultipledivergence}. However, we were able to show that the number $c$ of maxima is the same for both boundary conditions if one assumes a unique minimal surface $\sigma_{\text{min}}$. The multitude of the maxima has thus no influence on the entanglement entropy (\ref{centralapproximation}), which depends just on the quotient of the variables. 

\subsection{Ryu-Takayanagi Formula in the free theory}

In the limit of high dimensions $D(\Lambda)=\delta(e)$ of the leg space $\mathcal{H}$, the most divergent contributions determine the behavior of the fraction between the expectations of $Z_A^{(N)}$ and $Z_0^{(N)}$, which have been determined in the previous Sections. In the free theory the asymptotic behavior of the R\`enyi entanglement entropy (\ref{renyi}) is given by \cite{13Chi}:
\begin{align}
\label{RyuTakayanagiFree2}
\mathbb{E}[S_N(A)]\approx \frac{1}{N-1} \ln\left[\frac{\mathbb{E}[Z_A^{(N)}]}{\mathbb{E}[Z_0^{(N)}]}\right] & \approx \frac{1}{N-1}\ln\left[\frac{c\cdot \delta(e)^{N[|E\cup \partial \Gamma|-|V|]-|\sigma_{\text{min}}|(N-1)}}{c\cdot \delta(e)^{N[|E\cup \partial \Gamma|-|V|]}}\right]\\ \nonumber
& = |\sigma_{\text{min}}|\ln[\delta(e)]+\mathcal{O}(\frac{1}{\delta(e)})
\end{align}
Since no dependence on the parameter $N$ indicating the order of the R\`enyi-entropy is given in the limit $\delta(e)\gg1$, we directly apply the replica trick to recover the von-Neumann entanglement entropy $S(A)=\lim_{N\rightarrow 1}S_N(A)$. Equation (\ref{RyuTakayanagiFree2}) is thus the entanglement entropy within the approximation by local averaging in the free theory.

\

The proportionality of the entropy to the cardinality of the minimal domain wall $\sigma_{\text{min}}$ has a clear geometric interpretation, in the sense of discrete geometry, in the context of group field theory. The graph $\Gamma$ is the dual of a 2d simplicial complex. Each node is dual to a triangle and each link is dual to an edge of this complex, and the group field theory model endows the simplicial complex with dynamical geometric data. The length of each edge $\ell_j$
, in any given eigenstate of the length operator, is a function\footnote{The exact form of the function depends on the quantization map chosen to define the quantum theory.} of the irreducible representation $j_e$ associated to it, and to the dual link. If the quantum state is still defined on a fixed graph, but it is not an eigenstate of the length operator, then one has to average over the possible assignments of irreps $j_e$, with weights depending on the chosen state (i.e. on its decomposition into length eigenstates). We have
\begin{equation}
\text{Length}(\sigma_{\text{min}})= \sum_{e \in \sigma_{\text{min}}} \ell_e(j_e)=\braket{\ell_{j_e}} |\sigma_{\text{min}}|
\end{equation}
which can be interpreted as the length of a dual discrete minimal one-dimensional path. Therefore we can write $|\sigma_{\text{min}}|= \text{Length}(\sigma_{\text{min}})/\braket{\ell_{j_e}}$ and in this sense our result constitutes a Ryu-Takayanagi formula proposed in \cite{Ryu2006}, if we consider the path integral averaging over the open network $\Gamma$ as a simplified model of a bulk/boundary (spinfoam/network state) duality \cite{13Chi}. There are two further points to notice. First, our chosen quantum state fixes the parallel transports associated with the dual links to the identity and it is therefore maximally spread in the conjugate observables, which are in fact (including) the edge lengths associated to the same links. It is therefore a highly non-classical state to which it is not appropriate to associate a semi-classical geometric interpretation. A more appropriate choice, to this end, would be a coherent state peaking on both phase space variables on each link \cite{OritiCoherent,Bahr2007}
or, even better, a coherent state peaking on the collective variable $\text{Length}(\sigma_{\text{min}})$ as a whole \cite{OritiCollective}.
Second, a generic quantum state of the theory would also involve a superposition of combinatorial structures, i.e. a superposition of states associated to different graphs. In particular, this would be needed if the bulk is to admit a continuum geometric interpretation, going beyond the geometry associated to a (fixed) simplicial complex, which amounts to a drastic truncation of the allowed degrees of freedom of the fundamental (quantum gravity) model. In this case, one would have to understand the quantity $|\sigma_{\text{min}}|$ itself as the result of an average over such superposed graphs. Improving our derivation in both these directions would clearly be an interesting, and potentially important, development.

\section{Entropy Corrections from Group Field Interactions}
\label{SECint}

We are now interested in the possible modifications of \eqref{RyuTakayanagiFree2} induced by group field interaction terms. These interaction processes correspond to further stranded diagrams which contribute to the expectation value of $Z_{A/0}^{(N)}$. 

\subsection{Interaction processes}

In the free GFT calculations discussed above, the $d$-valence of the node tensors was arbitrary. For the interacting case, we fix the valence to $d=3$ and we specify the interaction kernel to be

\begin{align}
\mathcal{V}^{\text{sym}}(\{g_i^{(1)}\}\{g_i^{(2)}\}\{\overline{g}_i^{(1)}\}\{\overline{g}_i^{(2)}\}) = \int & \prod_{l=1}^4 dh_l \, \delta(h_1g_1^{(1)},h_3 \overline{g}_1^{(1)})\delta(h_1 g_2^{(1)},h_4\overline{g}_2^{(2)})\delta(h_1g_3^{(1)},h_2g_3^{(2)}) \nonumber \\
  & \delta(h_2g_1^{(2)},h_4\overline{g}_1^{(2)})\delta(h_2g_2^{(2)},h_3\overline{g}_2^{(1)})\delta(h_3\overline{g}_3^{(1)},h_4\overline{g}_3^{(2)})
\label{DEFv}
\end{align}

Stranded Feynman graphs contributing to $2N$-point functions are all possible combinations of the building blocks sketched in Figure \ref{FIGsymkernels}. Allowing for just one interaction vertex results already in a variety of possible combinations. If two fields of the interaction vertex propagate into each other, that is combining them with a propagator, we have an \emph{effective} propagator. Since the divergence degree of the resulting diagram is the same as that of a propagator,  the process could be captured in a mass renormalization. If two pairs of interacting fields propagate into each other, the process would disconnect and can be considered as a vacuum amplitude, which also does not contribute to the divergence degree of a diagram.
 
In this sense, we expect the only processes capable of extending the divergence degree to consist in the interaction of two incoming and two outgoing copies of network nodes. In the local averaging sector of the network statistics discussed in this Section, the $2N$-point functions of interest are the local averages $\mathbb{E}[(\rho^{(v)})^{\otimes N}]$, thus all interacting fields are restricted to copies of the same node $v$.

\subsection{Maximal divergent diagrams in the linear perturbation order}
\label{SECmaxdivint}

In the following, we shall estimate the divergence degree $\Omega$ of diagrams with one interaction process by determining the face number $\widetilde{\Omega}$, with the same strategy already used in Section \ref{SECfreelocal} for the free case. 

\begin{theorem}
\label{THEfacesint}
A pattern with a single interaction happening between two incoming and two outgoing fields of the same network node $v$ leads at most to the same face number $\widetilde{\Omega}$ as in a maximal case of the free theory. 
For the face number $\widetilde{\Omega}$ to reach the maximum, there must be a pattern of the free theory maximizing $\widetilde{\Omega}$, such that $v$ is incident to a link in the domain wall $\sigma$.
\end{theorem}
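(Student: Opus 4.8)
The plan is to localise the whole analysis to the single node $v$ that carries the interaction and to compare the resulting stranded diagram against the free-theory maximisers classified in Theorem \ref{THfaces}. Since there is exactly one interaction vertex and it involves only copies of $v$, every other node $u\neq v$ still propagates through a fixed permutation $\pi_u\in\mathcal{S}_N$ and all link projections away from $v$ are untouched. Hence the only faces whose number can differ from a reference free diagram are those closing \emph{inside} the tetrahedral vertex \eqref{DEFv} together with those running through the three links incident to $v$. I would therefore freeze the surrounding pattern $\{\pi_u\}_{u\neq v}$ and regard $\widetilde{\Omega}$ as a function of the local data at $v$ alone, splitting it into the contributions of the incident links plus the strands internal to the interaction.

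Next I would read off the strand connectivity directly from the six delta functions of \eqref{DEFv}: for $d=3$ the four fields $\phi_1,\phi_2,\overline{\phi}_1,\overline{\phi}_2$ play the role of the four triangles of a tetrahedron and the six deltas its six shared edges, so that two legs realise $\phi$--$\overline{\phi}$ identifications while the remaining legs glue $\phi_1$ to $\phi_2$ and $\overline{\phi}_1$ to $\overline{\phi}_2$ directly. Routing two incoming copies of $v$ (contracted with $\overline{\phi}_1,\overline{\phi}_2$) and two outgoing copies (contracted with $\phi_1,\phi_2$) through this gadget, while the remaining $N-2$ incoming/outgoing copies propagate freely through a permutation $\pi_v'$ of the complementary copies, I would count both the closed strands produced internally and the strands forwarded to the incident links. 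The central estimate is to compare this count with the two free diagrams obtained by deleting the vertex and reconnecting the same four copies by ordinary propagation, i.e. by extending $\pi_v'$ to a genuine $\pi_v\in\mathcal{S}_N$ through one of the two matchings of the routed incoming to the routed outgoing copies. Using the cycle/metric bookkeeping $\chi(\pi_{e_1}^{-1}\pi_{e_2})=N-d(\pi_{e_1},\pi_{e_2})$ already exploited in Theorems \ref{THfaces} and \ref{THdivergence}, I would show that the $\phi$--$\phi$ and $\overline{\phi}$--$\overline{\phi}$ identifications forced by the tetrahedron can never create more closed strands than the better of the two straight matchings, so that $\widetilde{\Omega}$ of the interacting diagram is bounded above by the free maximum.

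For the saturation statement I would analyse when this inequality is tight. The vertex necessarily glues two incoming copies together and two outgoing copies together through a fixed pattern; this can reproduce the best free matching only when the two sides of $v$ carry different permutation symbols, so that the ``defect'' introduced by the tetrahedron is absorbed by a misalignment already present in the free maximiser. If instead $v$ lies strictly inside one homogeneous region, so that every incident link is relative-trivial and already contributes its maximal $N$ faces, then any tetrahedral rerouting breaks at least one such face with no compensating gain and the count strictly drops. Equality therefore forces the existence of a free maximiser in which $v$ is incident to a link of the domain wall $\sigma$, which is exactly the claimed condition.

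The main obstacle I anticipate is the explicit combinatorial bookkeeping in the second step: one must track precisely which of the three leg-strands of each routed copy are tied together by the six deltas of \eqref{DEFv}, verify that no alignment of the three incident links with the surrounding permutations lets the tetrahedron overshoot the free count, and pin down the exact equality locus. Keeping the per-link cycle counts and the vertex-internal faces cleanly separated, so that the triangle inequality for the metric $d$ on $\mathcal{S}_N$ can be applied termwise, is the delicate part; once that is in place the remainder follows the inductive pattern of the free-theory proofs.
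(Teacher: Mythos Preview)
Your overall strategy coincides with the paper's: localise to the node $v$, freeze the surrounding permutations $\pi_1,\pi_2,\pi_3$ at the three neighbours, and compare the interacting diagram against free reference propagations at $v$ obtained from the two possible matchings of the routed copies. The paper parametrises the interaction exactly as you suggest, by a residual permutation $\bar{\pi}\in\mathcal{S}_N$ acting after the vertex, and then compares against the two free references $\pi=\bar{\pi}$ and $\pi=(\mathbb{F}_2\otimes\mathbb{I}_{N-2})\circ\bar{\pi}$.

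Two points where your plan should be sharpened to match what actually makes the argument close. First, there are no genuinely ``vertex-internal'' faces to track: the kernel \eqref{DEFv} routes strands index-by-index, so every strand leaving the interaction block still exits through one of the three incident links. The whole count is therefore a sum of three link contributions, and the asymmetry of the kernel is crucial --- index~1 passes straight through, index~2 is swapped, and index~3 glues the two incoming (resp.\ outgoing) copies to each other. Second, the comparison is not driven by the triangle inequality for $d$; that was the tool in Theorem~\ref{THfaces}, but here the paper uses the more elementary fact that composing with the transposition $\mathbb{F}_2\otimes\mathbb{I}_{N-2}$ changes any cycle number $\chi$ by exactly $\pm 1$, depending on whether positions $1$ and $2$ lie in the same cycle. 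The contradiction argument then reads: if the interaction beat every free reference, the index-$2$ and index-$3$ contributions would force positions $1,2$ into a common cycle of both $\bar{\pi}\pi_2^{-1}$ and $\bar{\pi}\pi_3^{-1}$; but then the reference $(\mathbb{F}_2\otimes\mathbb{I}_{N-2})\circ\bar{\pi}$ gains $+1$ on each of links $2$ and $3$ while losing at most $1$ on link $1$, so it is at least as good. Your saturation argument for the second statement is correct in spirit and is exactly how the paper concludes: when $\pi_1=\pi_2=\pi_3$ the index-$3$ contribution is at most $N-1$ in the interacting case versus $N$ for the free reference $\pi=\pi_1$, so equality forces two neighbours of $v$ to carry different symbols.
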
 

\begin{proof}
Let $v\in V$ be an arbitrary node in a network and its neighbors given by $u_1,u_2,u_3$ connected with a link affecting the argument of $v$ with the respective index $(1,2,3)$. The by (\ref{DEFv}) chosen interaction block $\mathcal{V}$ connects only strands of the same argument index $i$, which again enables us to decompose the number of strands into separate link contributions, dependent on the local processes $\{\pi_i\}$ at the neighboring nodes. 
\begin{figure}
\begin{tikzpicture}[scale=0.8]
\foreach \i in {1,2,3}
{
\draw[-] (4,0.5+\i*0.2) -- (5,0.5+\i*0.2);
\draw[-] (11,0.5+\i*0.2) -- (12,0.5+\i*0.2);
\draw[-] (4,2+\i*0.2) -- (5,2+\i*0.2);
\draw[-] (11,2+\i*0.2) -- (12,2+\i*0.2);
\draw[-] (4,-2+\i*0.2) -- (5,-2+\i*0.2);
\draw[-] (11,-2+\i*0.2) -- (12,-2+\i*0.2);
}
\draw[-] (5,2.7) -- (5,2.1);
\draw[-] (5,1.2) -- (5,0.6);
\draw[-] (8,2.7) -- (8,2.1);
\draw[-] (8,1.2) -- (8,0.6);
\draw[-] (8,-1.3) -- (8,-1.9);
\draw[-] (11,2.7) -- (11,2.1);
\draw[-] (11,1.2) -- (11,0.6);
\draw[-] (11,-1.3) -- (11,-1.9);
\draw[-] (5,-1.3) -- (5,-1.9);
\draw[-] (8,-1.6) --(11,2.4);
\draw[-] (8,2.4)node[right]{$k$}--(11,-1.6)node[left]{$\bar{\pi}(k)$};
\node (A) at (4,3) {in};
\node (A) at (12,3) {out};
\node (A) at (9.5,3) {$\bar{\pi}$ at node $v$};
\node (A) at (3.5,2.4) {$\phi_{(1)}$};
\node (A) at (3.5,0.9) {$\phi_{(2)}$};
\node (A) at (3.5,-1.6) {$\phi_{(D)}$};
\node (A) at (12.5,2.4) {$\overline{\phi}_{(1)}$};
\node (A) at (12.5,0.9) {$\overline{\phi}_{(2)}$};
\node (A) at (12.5,-1.6) {$\overline{\phi}_{(D)}$};
\draw[] (5,0.9) -- (8,2.4) node[midway,above] {$2$} ;
\draw[] (5,2.4) -- (8,0.9);
\draw[]  (5,0.7) -- (8,0.7) node[midway,above] {$1$};
\draw[]  (5,2.6) -- (8,2.6) node[midway,above] {$1$} ;
\draw[] (5,1.1) to[bend right=120] (5,2.2) ;
\node (A) at (5,1.6) {$3$} ;
\node (A) at (8,1.6) {$3$} ;
\draw[] (8,1.1) to[bend left=120] (8,2.2); 
\draw[-] (5.2,1) ellipse (2pt and 13pt);
\draw[-] (7.8,1) ellipse (2pt and 13pt);
\draw[-] (5.2,2.3) ellipse (2pt and 13pt);
\draw[-] (7.8,2.3) ellipse (2pt and 13pt);
\draw[-] (6.5,-1.6) ellipse (2pt and 13pt);
\node (sledz) at (5,-0.3)[]{$\vdots$};
\node (sledz) at (8,-0.3)[]{$\vdots$};
\draw[]  (5,-1.4) -- (8,-1.4) node[midway,above] {} ;
\draw[]  (5,-1.6) -- (8,-1.6) node[midway,above] {} ;
\draw[]  (5,-1.8) -- (8,-1.8) node[midway,above] {} ;
\begin{scope}[shift={(18,0.5)}]
\node (A) at (0,0) [circle,shade,draw] {$v$};
\node (sledz) at (0,-0.8)[]{$\bar{\pi}$};
\draw[-] (A) -- (-2,0) node[midway,above]{1} node[left]{$\pi_1$} ;
\draw[-] (A) -- (2,1) node[midway,above]{2} node [right]{$\pi_2$};
\draw[-] (A) -- (2,-1) node[midway,above]{3} node[right]{$\pi_3$} ;
\end{scope}
\end{tikzpicture}
\caption{Stranded graph of a local pattern with an interaction happening at the network node $v$. After relabeling of the network copies, the first two incoming copies of $v$ participate in the interaction and a further permutation $\bar{\pi}$ ensures the generality of the process.}
\label{FIGpermutationintsym}
\end{figure}
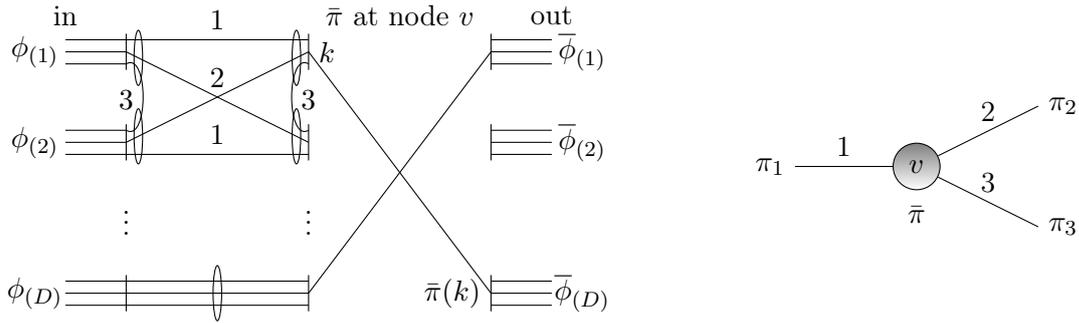
Let us assume the fields participating in the interaction process are of the first two copies, where a permutation operator to the symbol $\bar{\pi}$ acts on the outgoing fields (figure \ref{FIGpermutationintsym}). It is sufficient to discuss just this case, since we can renumber the copy index of the network, thus each $\bar{\pi}$ represents $\binom{N}{2}$ equivalent interaction processes.

We assume now for a given $\bar{\pi}$ a higher face number than for all reference free propagations at $v$ with symbol $\pi$, including the choice $\pi=\bar{\pi}$. The additional faces in the interacting case must result from link contributions with field indices 2 and 3 since the link contribution with index 1 is in both cases given by $\chi(\bar{\pi}\circ \pi_1^{-1})$.

For the link 3 to contribute not less in the interacting case than in the free case, $\bar{\pi}\circ \pi_3^{-1}$ must contain a cycle including positions $1$ and $2$, since only in that case the interaction of the index $3$ does not decrease the number of cycles in $\bar{\pi}\circ \pi_3^{-1}$ compared to the free case. Link $2$ contributes in the interacting case by one more than in the free case if and only if the additional action of $\mathbb{F}_2\otimes\mathbb{I}_{N-2}$ increases the number of cycles, that is, if and only if in $\bar{\pi}\circ \pi_2^{-1}$ the positions 1 and 2 are included in a cycle.

We consider now the choice $\pi=(\mathbb{F}_2\otimes \mathbb{I}_{N-2})\circ \bar{\pi}$ for the reference free propagation. Since $\bar{\pi}\circ \pi_2^{-1}$ and $\bar{\pi}\circ \pi_3^{-1}$ contain cycles connecting the first two positions, in $\pi \circ \pi_2^{-1}$ and $\pi \circ \pi_2^{-1}$ the first two positions are in different cycles, thus it holds:
\begin{align}
\chi(\pi \circ \pi_2^{-1})  =\chi(\bar{\pi}\circ \pi_2^{-1})+1 \quad &, \quad \chi(\pi \circ \pi_3^{-1}) =\chi(\bar{\pi}\circ \pi_3^{-1})+1 \quad ,\\
\chi(\pi \circ \pi_1^{-1})& \geq \chi(\bar{\pi}\circ \pi_1^{-1})- 1 \\
  \Rightarrow\quad   \sum_{i=1}^3 \chi(\pi \circ \pi_i^{-1})& \geq \sum_{i=1}^3 \chi(\bar{\pi} \circ \pi_i^{-1})+1
\end{align}
The choice $\pi=(\mathbb{F}_2\otimes\mathbb{I}_{N-2})\circ \bar{\pi}$ for a free reference process results thus at least in the same face number compared to $\bar{\pi}$ in the interacting theory. There is thus no $\bar{\pi}$, such that the associated interaction process leads to a higher face number $\widetilde{\Omega}$ compared to all other free propagations. 

Furthermore, there is no interaction pattern maximizing $\widetilde{\Omega}$ in case of $\pi_1=\pi_2=\pi_3$, since at most $N-1$ closed strands correspond to the index $3$ in an interaction pattern, whereas a free propagation with $\pi=\pi_1$ would result in $N$ closed strands corresponding to each index. Hence, if at $v$ an interaction process leads to a maximal face number $\widetilde{\Omega}$, there must be a different propagation process at two neighbors of $v$, thus $v$ would be incident to a domain wall $\sigma$ in a reference pattern in the free theory.
\end{proof}

Since for the boundary situation of $Z_0^{(N)}$ no domain walls appear in free permutation pattern maximizing $\widetilde{\Omega}$, local interaction can only lead to maximal face numbers in the situation $Z_A^{(N)}$. From the proof of theorem \ref{THEfacesint} we identify $2\binom{N}{2}$ possible interaction processes happening at one of the $2|\sigma_{\text{min}}|$ nodes incident to the minimal surface $\sigma_{\text{min}}$, which maximize the face number $\widetilde{\Omega}$. For the case of symmetric network states under study, these patterns will not result in an increasing divergence degree compared to the free patterns, as we will show in the following. 
\begin{theorem}
\label{THEdivergenceint}
A pattern with a single interaction happening between two incoming and two outgoing fields of the same network node $v$ has at most the same divergence degree $\Omega$ as in a reference case of the free theory, where the interaction at $v$ is replaced by a free propagation.
\end{theorem}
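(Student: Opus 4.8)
The plan is to run the argument of Theorem \ref{THdivergence} a second time, now comparing the interacting diagram $\mathcal{G}$ with the free reference diagram obtained by replacing the vertex at $v$ with the propagation $\pi=(\mathbb{F}_2\otimes\mathbb{I}_{N-2})\circ\bar{\pi}$ singled out in the proof of Theorem \ref{THEfacesint}. Because the two diagrams are strand-by-strand identical away from $v$, and because Theorem \ref{THEfacesint} already tells us that the interacting face number can only reach the free one when $v$ lies on the domain wall $\sigma_{\text{min}}$ (hence only in the $Z_A^{(N)}$ situation), the whole comparison reduces to the local gauge integration carried out at $v$. I would therefore try to show that, at fixed surrounding network, the holonomy integrals attached to the interaction vertex lower the face count at least as much as those attached to the reference propagator.

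First I would write both amplitudes in the form \eqref{amplitude} and perform the minimal-forest integration \eqref{reducedamplitude}, choosing the same subgraph $T$ connecting every node to the boundary as in Theorem \ref{THdivergence}; this produces the bound $\Omega[\mathcal{G}]\leq\widetilde{\Omega}[\mathcal{G}/\mathcal{T}]$ and is literally the same step for both diagrams outside $v$. The only local difference is that the symmetric interaction kernel \eqref{DEFv} carries four closure holonomies $h_1,\dots,h_4$ against the same six delta functions, whereas the two reference propagators on the participating copies $1,2$ carry only two. Thus the interacting diagram comes with two extra gauge integrations localized at $v$ which are not consumed by $T$; integrating them on top of the forest can only decrease the surviving face number, so I would aim for a sharpened bound of the form $\Omega_{\text{int}}\leq\widetilde{\Omega}[\mathcal{G}/\mathcal{T}]-2$. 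Combined with the face inequality $\widetilde{\Omega}_{\text{int}}\leq\widetilde{\Omega}_{\text{free}}$ of Theorem \ref{THEfacesint}, and with the identification of $\Omega_{\text{free}}$ through the same forest reduction, this would give $\Omega_{\text{int}}\leq\Omega_{\text{free}}$.

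The heart of the argument, and the step I expect to be hardest, is proving that the two supernumerary holonomies of \eqref{DEFv} really do remove two \emph{independent} faces rather than merely contributing trivial gauge volume. Concretely I would follow the six delta functions of \eqref{DEFv}: the first two field indices are glued across the two participating copies while the third index closes within the vertex, so that after the forest integration the leftover holonomies appear nontrivially in two distinct face holonomies. One then has to check that these two closure constraints are linearly independent of those already fixed along $T$ and of one another, equivalently that the rank of the closure-constraint system of the interacting diagram is never smaller than that of the reference, so that $\Omega=\widetilde{\Omega}-(\text{rank})$ cannot exceed the free value. The delicate point is the twisted, non-parallel strand connectivity of the tetrahedral vertex: one must rule out accidental coincidences of these face holonomies with the identity that would turn an integration into a pure volume factor and spoil the expected reduction, and this verification has to be carried out uniformly in $\bar{\pi}$ and for every admissible placement of $v$ on $\sigma_{\text{min}}$.
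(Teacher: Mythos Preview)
Your plan diverges from the paper's in a fundamental way. The paper does \emph{not} try to show that the extra holonomies of the tetrahedral vertex always kill extra faces. It argues by contradiction: assuming $\Omega_{\text{int}}>\Omega_{\text{free}}^{(\pi)}$ for \emph{every} free replacement $\pi$, it notes that the two propagator gauges on copies $1,2$ induce at most two evaluations in any free reference, so (using $\widetilde{\Omega}_{\text{int}}\le\widetilde{\Omega}_{\text{free}}$ from Theorem~\ref{THEfacesint}) the interaction block can induce at most one. This forces a specific cycle structure on $\bar{\pi}^{-1}\circ\pi_1$ and $\bar{\pi}^{-1}\circ\pi_2$ (the first has $1,2$ in a common cycle, the second does not). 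The paper then passes to the \emph{modified interaction} label $(\mathbb{F}_2\otimes\mathbb{I}_{N-2})\circ\bar{\pi}$, for which the face number jumps by two, and shows that the evaluation count this would require is incompatible with the existence of a free reference of at least the same face number and at most two evaluations. Two references are played off against each other; no single reference is fixed at the outset.

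The gap in your proposal is exactly the step you already flag as hardest, and it is not a technicality. First, a book-keeping point: the tetrahedral kernel~\eqref{DEFv} carries the diagonal redundancy $h_i\mapsto hh_i$, so after the forest consumes two of the four holonomies you are left with at most one genuinely new independent integration, not two; your target bound $\Omega_{\text{int}}\le\widetilde{\Omega}[\mathcal{G}/\mathcal{T}]-2$ is therefore too strong as stated. Second, and more importantly, whether that remaining integration evaluates an \emph{independent} face depends on the cycle structure of $\bar{\pi}^{-1}\circ\pi_i$: precisely in the configurations where the face count is tight, the leftover holonomy can sit on a face whose holonomy is already fixed by the surrounding pattern, so the integration contributes only a volume factor. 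The paper's contradiction manoeuvre exists because this uniform rank statement does not close directly. Finally, your reduction to the case ``$v$ on $\sigma_{\text{min}}$'' is premature: Theorem~\ref{THEfacesint} orders only $\widetilde{\Omega}$, not $\Omega$, so when $\widetilde{\Omega}_{\text{int}}<\widetilde{\Omega}_{\text{free}}$ strictly you still owe an argument that the face deficit is not overcompensated by fewer evaluations on the interacting side --- again something the paper handles by its global contradiction rather than by a local rank count.
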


\begin{proof}
Let us again model the interaction process at $v$ by $\bar{\pi}$, as sketched in figure \ref{FIGpermutationintsym}.
From theorem \ref{THEfacesint} we already know, that the face number $\widetilde{\Omega}$ of the diagram with an interaction does not exceed the degree of a reference free propagation happening at $v$. In the free propagating case there are two gauge parameters associated with the first two node copies, thus at most two evaluations of delta functions can be induced by them. 

Let us now assume, that the interaction process leads to a higher divergence degree than for all replacements by free propagations, which implies that there can be at most one evaluation of delta functions induced by the interaction block. Since all delta functions associated with one network link can always be evaluated, the symbol $\bar{\pi}^{-1}\circ \pi_1$ would have a cycle containing the first two indices and the symbol $\bar{\pi}^{-1}\circ \pi_2$ would not. But then a replacement of $\bar{\pi}$ by $(\mathbb{F}_2\otimes \mathbb{I}_{N-2})\circ \bar{\pi}$ would increase the number of faces by two. If our assumption of maximal divergence was correct, there have to be at least three evaluations in the modified interaction process, such that the modified process does not exceed the divergence degree. But the modified interaction process cannot be maximal divergent with three evaluations since we find a reference free propagation with at least the same face number and at most two evaluations.
\end{proof}

Theorem \ref{THEdivergenceint} enables us to follow the previous arguments in the optimization of the face number $\widetilde{\Omega}$, since the divergence degree of patterns in the free theory cannot be increased by local modifications with interaction processes. We thus just consider the pattern in the free theory, which maximizes the divergence degree $\Omega$, and study the impact of the modification by a local interaction process. Since only in this case the face number can stay constant, the modification needs to take place at a node incident to a domain wall.

\begin{theorem}
\label{THEfinal}
Let $\Gamma=(V,E\cup \partial\Gamma)$ be a connected network graph, which is not coarse-grainable to a tree and which has a unique minimal set $\sigma_{\text{min}} \subset E$ separating the boundary regions $A$ and $B$. Then, each diagram with a single local interaction process has a smaller divergence degree compared to the unique maxima in the free theory.
\end{theorem}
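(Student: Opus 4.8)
The plan is to reduce the interacting statement to the free-theory classification already established and then to isolate the single genuinely new mechanism, namely the holonomy counting at a node sitting on the domain wall. First I would invoke Theorem~\ref{THEdivergenceint}: any diagram carrying a single local interaction at a node $v$ has divergence degree bounded by that of the \emph{reference} free diagram $\mathcal{G}'$ obtained by replacing the interaction with the free propagation $\pi^{*}=(\mathbb{F}_2\otimes\mathbb{I}_{N-2})\circ\bar{\pi}$. Since $\Gamma$ is connected, not coarse-grainable to a tree, and $\sigma_{\text{min}}$ is unique, Theorem~\ref{THdivergence}(iii) guarantees that the free maximum $\Omega_{\max}$ is realized by a \emph{unique} pattern (constant $\mathbb{I}$ for $Z_0^{(N)}$, the $\mathbb{I}/\mathbb{F}$ split across $\sigma_{\text{min}}$ for $Z_A^{(N)}$). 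This immediately disposes of the case in which $\mathcal{G}'$ is not the maximizer: there $\Omega[\mathcal{G}']<\Omega_{\max}$, hence $\Omega[\mathcal{G}]\le\Omega[\mathcal{G}']<\Omega_{\max}$ strictly.

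The remaining case is $\mathcal{G}'$ equal to the unique free maximizer, where Theorem~\ref{THEdivergenceint} gives only $\le$ and the work is to promote it to $<$. I would first use Theorem~\ref{THEfacesint} together with the observation following it: for $Z_0^{(N)}$ no domain wall occurs, so the interaction cannot even realize the maximal face number, $\widetilde{\Omega}[\mathcal{G}]<\widetilde{\Omega}_{\max}$; combined with the fact that the interaction vertex $\mathcal{V}^{\text{sym}}$ in \eqref{DEFv} carries strictly more gauge integrations than the two propagators it replaces, so that the number of delta-function evaluations does not drop, this forces $\Omega[\mathcal{G}]<\Omega_{\max}$. The same face-deficit argument closes the case where $v$ is an interior node, not incident to $\sigma_{\text{min}}$, in the situation of $Z_A^{(N)}$.

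This leaves the truly hard case: $Z_A^{(N)}$ with $v$ incident to $\sigma_{\text{min}}$ and $\bar{\pi}$ the copy-transposition, so that $\mathcal{G}'$ is the $\mathbb{I}/\mathbb{F}$ split and $\widetilde{\Omega}[\mathcal{G}]=\widetilde{\Omega}_{\max}$. Here the face number is already maximal, so strictness can only come from the gauge integration in the amplitude~\eqref{amplitude}. The plan is a direct holonomy count at $v$: the four integrations $h_1,\dots,h_4$ of $\mathcal{V}^{\text{sym}}$, replacing the two integrations of the propagators, impose one additional \emph{non-redundant} face constraint precisely because $v$ lies where the neighboring permutations genuinely differ ($\mathbb{I}$ on one side, $\mathbb{F}$ on the other), so that one more delta is consumed than faces are created and $\Omega[\mathcal{G}]\le\Omega_{\max}-1$. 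The cleaner route I would actually pursue is to run the bound \eqref{divergenceestimation2} for the interacting diagram and show that equality with $\Omega_{\max}$ forces the tree-correction term to vanish simultaneously with the domain-wall term; by Definition~\ref{DEFcoarsegrainable} this simultaneous vanishing is possible only if $\Gamma$ coarse-grains to a nontrivial tree, contradicting the hypothesis. Either way one concludes $\Omega[\mathcal{G}]<\Omega_{\max}$.

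The main obstacle is exactly this last step: showing that at a domain-wall node the extra pair of gauge integrations carried by the Boulatov vertex cannot be absorbed redundantly, i.e. that it strictly lowers the number of surviving delta factors unless the ambient graph offers the tree-like freedom that non-coarse-grainability excludes. Making the correspondence between ``redundant extra constraint'' and ``coarse-grainability to a tree'' precise, effectively an interacting analogue of the saturation analysis in the proof of Theorem~\ref{THdivergence} and of the multiple-maxima discussion deferred to the Appendix, is where the real content lies; the remainder is bookkeeping on $S_N$ cycle counts.
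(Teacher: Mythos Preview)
Your overall plan matches the paper's proof closely. The paper likewise reduces to the free classification via Theorems~\ref{THdivergence}, \ref{THEfacesint} and \ref{THEdivergenceint}, invokes uniqueness of the free maximum under the stated hypotheses, disposes of $Z_0^{(N)}$ by the absence of a domain wall, and isolates the domain-wall case for $Z_A^{(N)}$ as the only nontrivial point.

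The paper's treatment of that last case is considerably terser than yours: it simply asserts that an interaction at a node incident to $\sigma_{\text{min}}$ ``always results in three evaluations induced by the interaction block'', i.e.\ the four gauge parameters of $\mathcal{V}^{\text{sym}}$ absorb one more delta than the two propagators they replace, so $\Omega$ drops by one. This is exactly your first route; the paper neither elaborates the holonomy count further nor pursues your second route through the saturation analysis of \eqref{divergenceestimation2} and coarse-grainability. What you flag as the ``main obstacle'' is thus precisely the step the paper states rather than argues in detail; your instinct that the content lies there is correct, but the paper is content with the bare assertion. Your alternative route via the tree-correction term is a legitimate and somewhat more structural way to see the same thing, connecting the strictness directly to the non-coarse-grainability hypothesis, though it is not the argument the paper chose.
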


\begin{proof}
In theorem \ref{THdivergence} we determined the unique pattern of the free theory maximizing the divergence degree under the same assumptions as here. The dominant pattern to $Z_0^{(N)}$ does not have a domain wall, thus with theorems \ref{THEfacesint} and \ref{THEdivergenceint} all patterns with a single local interaction process have subleading divergence degree. Although the dominant pattern contributing to $Z_A^{(N)}$ contains a domain wall $\sigma_{\text{min}}$, an interaction happening at a node incident to $\sigma_{\text{min}}$ always results in three evaluations induced by the interaction block and the diagram would be subleading.
\end{proof}

Theorem \ref{THEfinal} therefore proves for a broad class of network graphs, that the local permutation pattern of the free theory are maxima of the divergence degree, also if we include pattern with  -- up to one -- happening interaction process. For the simple case of a single $\mathcal{O}(\lambda)$ GFT interaction term the free theory result 
\begin{align}\nonumber
\mathbb{E}[S_N(A)]\approx |\sigma_{\text{min}}|\ln[\delta(e)]+\mathcal{O}(\frac{1}{\delta(e)})
\end{align}
is not modified. As we will discuss in the next Section, there are, however, special network architectures, which are coarse-grainable to a tree and allow for multiple maxima of $\Omega$, even if the minimal surface $\sigma_{\text{min}}$ is unique. 

One main point to notice about the above result is the general link between the computed entropy and the divergences of the quantum amplitudes of the group field theory model we have been using. This is important from a conceptual standpoint, as we will discuss later on. From a more technical perspective, it is also important because the perturbative divergences of group field theory models are a well-explored subject \cite{Rivasseau2011, Carrozza:2016vsq}, and in particular the divergences of topological GFTs, as the Boulatov mode we have used, are well-understood 
\cite{FreidelRegularization,Bonzom,BonzomTwisted,Carrozza2014,
Geloun}
They are associated to 3-cells of the 3d cellular complex identified by each GFT Feynman diagram, or to the vertices of the dual 3d simplicial complex, and they have been fully classified. Such generic divergences do not appear in our calculation, however, because Feynman diagrams at order $\lambda$ only involve a single interaction vertex and no bubble (in fact, there is no real 3d dynamics at such linear order, even in the sense of topological gravity, thus any physical interpretation of our present result should be attempted with caution), but it is clear that they will become crucial when going beyond this crude approximation, and that the technical tools to do so are already available. 

Concerning the same perturbative approximation of the underlying GFT model that we have relied on (and that the whole spin foam literature, for example, relies on too), another important cautionary remark should be made. Assuming it is tantamount to assuming that the quantity we are trying to evaluate is analytic in the GFT coupling constant, and thus well approximated already at low orders. From the point of view of spacetime structures, this means assuming that their role is well approximated by simple cellular complex (made of few vertices, edges, faces etc), {\it before any coarse-graining of the same complexes, and of the associated quantum amplitudes}. This may not be the case. In fact, it is not expected to be the case in much GFT (and tensor models) literature, where instead the goal is to extract emergent continuum gravitational (thus geometric) physics from the collective behavior of the underlying microscopic degrees of freedom \cite{OritiBronstein,OritiCondensate,OritiLQG,Rivasseau2016,
Rivasseau2012}.
If the relation between the Renyi entropy of our states and the divergences of the underlying GFT is generic, as we expect, the former is probably not analytic in the coupling constant, and its value will be dictated by the most divergent contributions to the Feynman expansion of the GFT model, which are obviously growing with the number of interaction vertices involved, thus it will be ultimately dominated by higher powers of $\lambda$.
This is one more reason to go beyond the approximation adopted in the present work.

\subsection{Networks coarse-grainable to a tree}
\label{APPmultipledivergence}
We recall the upper bound (\ref{divergenceestimation2}) of the divergence degree in case of free local propagations, where neighboring nodes with same propagation pattern were coarse-grained to regions $m$. It has been shown in Section \ref{SECfreelocal} and \ref{SECmaxdivint}, that up to the linear order of the interaction a pattern is maximal divergent if and only if for this upper bound is maximal and straight. A maximum of the upper bound is attained for the pattern also maximizing the face number $\widetilde{\Omega}$, which consists of a single region with $\mathbb{I}$ in the boudary case of $Z_0^{(N)}$ and an additional region with $\mathbb{F}$ in case of $Z_A^{(N)}$. Derivations from this pattern result in the same upper bound (\ref{divergenceestimation2}), only if the following stays constant:
\begin{align}
-\sum_{m\tilde{m}}E_{m\tilde{m}}d(\pi_m,\pi_{\tilde{m}})+\sum_{m\tilde{m}\in T^c}d(\pi_m,\pi_{\tilde{m}})
\label{twoterms}
\end{align}
In case of multiple minimal surfaces $\sigma_{\text{min}}$ this bound stays constant for the pattern discussed in Appendix \ref{APPmultipleminimal}, where $T^c$, which is the minimal subgraph along which all associated gauge freedoms are evaluated, increases.

If the minimal surface $\sigma_{\text{min}}$ is unique, (\ref{twoterms}) can just attain further maxima in both boundary situations, if the decrease of the first term is compensated by the increase of the second. However, since $E_{m\tilde{m}}\geq 1 $ for $(m,\tilde{m}) \in T^c$, this directly implies a tree structure of the coarse-grained graph. As sketched in figure \ref{FIGexamplecoarsetree}, we thus find further maximal upper bounds (\ref{twoterms}) in case of a network graph, which is coarse-grainable to a tree. The number of maxima of the bound (\ref{twoterms}) is furthermore equal in the boundary situation of $Z_0^{(N)}$ and $Z_A^{(N)}$. Since in each case all delta functions associated with the links between different regions will drop out in the amplitude (\ref{amplitude}) by parameter evaluation, the choice of the permutation symbol at each region does not modify $\Omega$. The less amount of faces in this case is compensated by a remaining freedom of the parameter $h_m$ in each region of the coarse-grained tree structure.

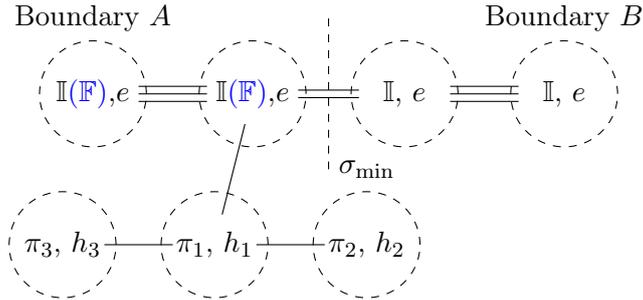
\begin{figure}[t]
\centering
\begin{tikzpicture}
\node (X) at (-4.1,1) {Boundary $A$};
\node (X) at (2.1,1) {Boundary $B$};

\draw[dashed] (0,0) ellipse (20pt and 20pt) node[]{$\mathbb{I}$, $e$};
\draw[dashed] (-4.1,0) ellipse (20pt and 20pt) node[]{$\mathbb{I}$\textcolor{blue}{($\mathbb{F}$)},$e$};
\draw[dashed] (-2,0) ellipse (20pt and 20pt) node[]{$\mathbb{I}$\textcolor{blue}{($\mathbb{F}$)},$e$};
\draw[dashed] (2.1,0) ellipse (20pt and 20pt) node[]{$\mathbb{I}$, $e$};
\draw[dashed] (-2.5,-2) ellipse (20pt and 20pt) node[]{$\pi_{1}$, $h_{1}$};
\draw[dashed] (-0.5,-2) ellipse (20pt and 20pt) node[]{$\pi_{2}$, $h_{2}$};
\draw[dashed] (-4.5,-2) ellipse (20pt and 20pt) node[]{$\pi_{3}$, $h_{3}$};
\draw[-] (-1.05,-2)--(-1.95,-2);
\draw[-] (-3.05,-2)--(-3.95,-2);

\draw[-] (-3.5,0)--(-2.6,0);
\draw[-] (-3.5,0.1)--(-2.6,0.1);
\draw[-] (-3.5,-0.1)--(-2.6,-0.1);

\draw[-] (-1.4,-0.05)--(-0.6,-0.05);
\draw[-] (-1.4,0.05)--(-0.6,0.05);

\draw[-] (0.6,0)--(1.5,0);
\draw[-] (0.6,0.1)--(1.5,0.1);
\draw[-] (0.6,-0.1)--(1.5,-0.1);

\draw[dashed] (-1,1)--(-1,-1) node[right]{$\sigma_{\text{min}}$};
\draw[-] (-2.1,-0.4)--(-2.4,-1.6);
\end{tikzpicture}

\caption{Network with a tree structure after a coarse-graining procedure. The special topology enables maximal divergent patterns with arbitrary symbol $\pi_m$.}
\label{FIGexamplecoarsetree}
\end{figure}

We can exploit the independence of the divergence degree $\Omega$ on the links connecting the coarse-grained tree to construct maximal divergent pattern with a local interaction. Let us therefore assume the region with symbol $\pi_1$ consisting just of one network node $u$. Since in this case all links incident to $u$ do not contribute to the divergence degree, any local process happening at $u$, thus also interaction processes, result in a maximal divergent diagram. 

\section{Discussion}
\label{SECdiscussion}

\subsection{Entanglement and geometry}

Symmetric group field networks carrying discrete geometries in their spin-network decomposition were taken as a state class of central interest. Along with the pre-geometric interpretation of the fundamental quanta of a group field theory \cite{OritiQMB}, we understand such network states as a collection of abstract quanta of space connected by entanglement patterns partially reflecting symmetry and topology of a quantum discrete geometries \cite{Dona, Bianchi, Mele}. In this picture, the quantum-many-body approach to quantum geometry has allowed to import new quantitative tools to investigate the behavior of the quantum geometry states in non-perturbative quantum gravity \cite{Han}. As an explicit example in this sense, the Ryu-Takayanagi formula, generally intended as a proportionality between entanglement entropies of boundary states to the surface of minimal areas in a dual bulk state \cite{TakayanagiBuch},  can be used as a guiding principle to select states with an interesting semiclassical limit \cite{Hamma}. 

The kinematic states associates to quantum geometries in the background independent approach to Quantum Gravity \cite{01Ash} consist of spin-network states with the structure of a tensor network. Tensorial entanglement and building of discrete geometries by network states have a natural correspondence, made precise in \cite{13Chi}. It is however not clear from first principles if the geometric interpretation of holographic network models in terms of the Ryu-Takayanagi formula is reflected by such discrete geometries. A connection between spin-networks and holography is established in \cite{Han}, where network states representing the holographic duality are derived from a coarse-graining of spin-network states. Spin networks are thereby representing bulk degrees of freedom, where the coarse-grained tensor network represents the boundary state. Also within this setup, the Ryu-Takayanagi proposal holds in the semiclassical regime provided by a limit of increasing scale in the coarse-graining procedure \cite{Han}. 

A crucial ingredient for establishing the holographic properties of such network states in quantum gravity states may be found in the random character of the tensor networks, as suggested in \cite{Hayden}. The group-field theoretic derivation for the typical entanglement entropy for a generalized open spin-network, as given in \cite{13Chi} points in this direction. The underlying assumptions of a free theory and independent averaging at each network node capture just a small corner of possible amplitudes, nevertheless we find further evidence for the dominance of this corner in this work.


We were able to prove the validity of the independence assumption within a class of cases in Section \ref{SEClocalevidence}. 
%
However, the central new aspect faced in this work consists in the computation of the correction terms to the entanglement entropy possibly induced by deviations from the Gaussian random tensor description, introduced by considering an interacting group field theory, hence a polynomially perturbed Gaussian measure for the tensor fields. Building on theorem \ref{THEdivergenceint} and two assumptions on the network graph, which is not coarse-grainable to a tree and possesses a unique minimal surface, the main result of the work consists in the proof, in theorem \ref{THEfinal}, that the linear order correction of the perturbation series produces no leading amplitudes in the average of $Z_0^{(N)}$ and $Z_A^{(N)}$, thus the Ryu-Takayanagi formula is not modified.



The arguments used are strongly based on the dominance of the stranded Feynman diagram by the contraction structure (figure \ref{FIGfullvariable}), which is fixed by the network topology and the boundary conditions of the variables. Already at the level of the free theory, it would be interesting to see what changes if the kinetic term is modified from a simple delta function to something more general. For higher order perturbations, on the other hands, the diagrams will be dominated by the bulk structure, given by interaction vertices, and we expect significant changes in the leading order amplitudes. To calculate the impact of these sectors, one needs to control the divergent degrees of such diagrams, similarly to what has been done in previous renormalization studies \cite{Bonzom}. 

One interesting general feature of our results is the relation we found between the Renyi entropy and the (perturbative) divergences of the GFT model within which this is computed. This implies a direct link between continuum physics and geometry, to the extent to which it is captured by the Renyi entropy and the Ryu-Takayanagi formula,  and the renormalization group flow of the fundamental quantum gravity model, thus the collective, many-body physics of its basic entities. Beyond the technical points we have already discussed, this relation is evocative of (and clearly consistent with) the general perspective that sees continuum spacetime and geometry as emergent from the collective behavior of the  fundamental quantum gravity degrees of freedom encoded in these models. This perspective, indeed, motivates a large part of the literature and in particular the one concerned with GFT renormalization (both perturbative and non-perturbative). Such overall coherence between the specific GFT realization of the emergent spacetime (and geometry) perspective and the ideas inspiring the \lq geometry from entanglement\rq  scenario, within which the Ryu-Takayanagi result plays such an important role, is very remarkable, albeit still tentative. We take it as a further motivation to proceed along the research direction followed in the present work.

\subsection{Modification of the holographic entanglement scaling}

Finally, an interesting remark concerns the role of symmetry in this result. Along with the result obtained in \cite{13Chi}, throughout the work we have treated the network states and their entanglement variables on the base of a probabilistic distribution on group fields satisfying a \emph{symmetry constraint}. Although the symmetry constraint is important for a reformulation of group field networks in terms of spin-network states \cite{13Chi}, dropping it offers an interesting perspective on their statistics. By considering analogous probability measures on more general group fields, that do not satisfy the closure constraint discussed in Section \ref{SECsymmetricgroupfields} and therefore defining an action by a propagation and interaction kernel by dropping the gauge integrations in (\ref{choiceK}) and (\ref{DEFv}) one finds:
\begin{align}
\mathcal{K}^{\text{non-sym}}(g_1,...,g_d,\overline{g}_1,...,\overline{g}_d)= & \prod_{i=1}^d \delta( g_i\circ \overline{g}_i^{-1})
\label{choiceKunsym}
\\
\mathcal{V}^{\text{non-sym}}(\{g_i^{(1)}\}\{g_i^{(2)}\}\{\overline{g}_i^{(1)}\}\{\overline{g}_i^{(2)}\}) = &  \delta(g_1^{(1)}, \overline{g}_1^{(1)})\delta( g_2^{(1)},\overline{g}_2^{(2)})\delta(g_3^{(1)},g_3^{(2)}) \nonumber 
\\
  & \delta(g_1^{(2)},\overline{g}_1^{(2)})\delta(g_2^{(2)},\overline{g}_2^{(1)})\delta(\overline{g}_3^{(1)},\overline{g}_3^{(2)})
\label{choiceVunsym}
\end{align}
Both kernels thus remain the combinatorial structure with the only difference lying in the missing group averaging with respect to gauge parameters $h$, which would enforce symmetry properties. The expectation values $\mathbb{E}_{\text{non-sym}}[Z_{A/0}^{(N)}]$, now averaging on general group fields, are analogously expanded into stranded graphs as in figure \ref{FIGfullvariable}. Without integration of the gauge parameters, which amounts to already trivial face holonomies, the amplitudes of stranded graphs $\mathcal{G}$ reduces to:
\begin{align}
\mathcal{A}[\mathcal{G}]=\prod\limits_{f \in G} \delta(e)=\delta(e)^{\widetilde{\Omega}[\mathcal{G}]}
\label{amplitudeunsym}
\end{align}
The divergent degree of any diagram equals the face number $\widetilde{\Omega}[\mathcal{G}]$ and we can directly apply theorems 
\ref{THEconnected}, \ref{THEtreenetwork}, \ref{THfaces} and \ref{THEfacesint} to determine the asymptotic behavior of $\mathbb{E}_{\text{non-sym}}[Z_{A/0}^{(N)}]$. Assuming a network graph $\Gamma=(V,E\cup \partial\Gamma)$ with a disjoint partition $A\cup B=\partial\Gamma$ such that the minimal surface $\sigma_{\text{min}}$ is unique and the graph cannot be coarse-grained to a tree, in the sector of local averaging one finds 
\begin{align}
\mathbb{E}_{\text{non-sym}}[Z_{A}^{(N)}]=& \left[1+2\lambda\binom{N}{2}|\sigma_{\text{min}}|+\mathcal{O}(\lambda)^2)\right]\left[1+\mathcal{O}\frac{
1}{\delta(e)})\right]\delta(e)^{|V|-(N-1)|\sigma_{\text{min}}|}
\\
\mathbb{E}_{\text{non-sym}}[Z_{0}^{(N)}]=& \delta(e)^{|V|}
\end{align}
The expected $N$th R\`enyi entanglement entropy (\ref{centralapproximation}) is thus estimated as:
\begin{align}
\mathbb{E}_{\text{non-sym}}[S_N(\rho_A)]\approx & \ln[\delta(e)]|\sigma_{\text{min}}|-\frac{\ln\left[1+2\lambda \binom{N}{2}\right]|\sigma_{\text{min}}|]}{N-1}\\
\approx & |\sigma_{\text{min}}| \left[\ln[\delta(e)]-\lambda N \right]
\label{scalingcorrection}
\end{align}
In the non-symmetric case the linear order correction modifies the asymptotical scaling of the R\`enyi entanglement entropy with the area of a minimal surface and establishes therefore a new Ryu-Takayanagi proportionality \cite{Ryu2006}. The von-Neumann entropy $S(A)$ \cite{Nielsen} corresponds to the limit $N\rightarrow 1$ the R\`enyi entropy, where the proportionality gets in facts corrected by the coupling constant $\lambda$ of the perturbed group field theory.  The active role of the GFT dynamics in the rescaling of the area proportionality factor is intriguing. A more refined analysis of the modified holographic entanglement scaling consists an interesting quesiton for further investigation.

\section*{Acknowledgements}
The authors are grateful to the support of the Albert Einstein Institute, where the main part of this work was realized. A. Goe{\ss}mann also acknowledges the recent support from the MATH+ research center and the Fritz Haber Institute. M. Zhang acknowledges the support from the A. von Humboldt Foundation.

\appendix

\section{Networks with multiple maxima of the face number in the free theory}
\label{APPmultipleminimal}

The unique pattern maximizing the face number $\widetilde{\Omega}$ in the boundary conditions of $Z_0^{(N)}$ is with theorem \ref{THEconnected} the association of $\mathbb{I}$ to each node. For the different boundary conditions of $Z_A^{(N)}$ theorem \ref{THfaces} identifies the unique maximum of $\widetilde{\Omega}$ with additional assumption of a unique minimal surfaces $\sigma_{\text{min}}$ separating the regions $A$ and $B$. Dropping this assumption can give rise to different maxima, if the inequalities (\ref{ineq1},\ref{ineq2}) hold straight. All links between nodes with different permutation symbols have therefore to be included in $|\sigma_{\text{min}}|$ disjoint paths $P_k$ between boundary $A$ and $B$ and the triangle equation needs to hold straight for the permutations along each path. This directly implies the separation of the network into regions of constant permutations by different minimal surfaces $\sigma_{\text{min}}$.

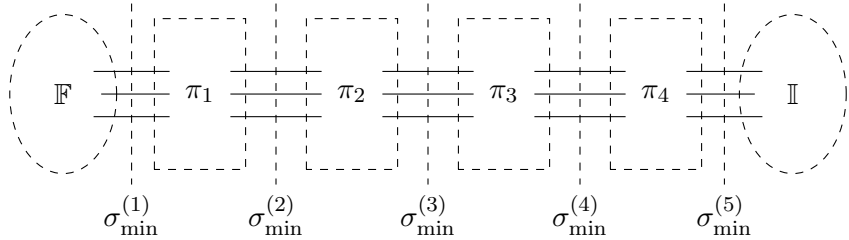
\begin{figure}[t]
\centering
\begin{tikzpicture}
\draw[dashed] (0,0) rectangle (1.2,2);
\draw[dashed] (2,0) rectangle (3.2,2);
\draw[dashed] (4,0) rectangle (5.2,2);
\draw[dashed] (6,0) rectangle (7.2,2);
\draw[-] (0.2,1)--(-0.7,1);
\draw[-] (0.2,1.3)--(-0.8,1.3);
\draw[-] (0.2,0.7)--(-0.8,0.7);
\draw[-] (1,1)--(2.2,1);
\draw[-] (1,1.3)--(2.2,1.3);
\draw[-] (1,0.7)--(2.2,0.7);
\node (X) at (0.6,1) {$\pi_1$};
\draw[-] (3,1)--(4.2,1);
\draw[-] (3,1.3)--(4.2,1.3);
\draw[-] (3,0.7)--(4.2,0.7);
\node (X) at (2.6,1) {$\pi_2$};
\draw[-] (5,1)--(6.2,1);
\draw[-] (5,1.3)--(6.2,1.3);
\draw[-] (5,0.7)--(6.2,0.7);
\node (X) at (4.6,1) {$\pi_3$};
\draw[-] (7,1)--(7.9,1);
\draw[-] (7,1.3)--(8,1.3);
\draw[-] (7,0.7)--(8,0.7);
\node (X) at (6.6,1) {$\pi_4$};
\draw[dashed] (-1.2,1) ellipse (20pt and 30pt) node[]{$\mathbb{F}$};
\draw[dashed] (8.4,1) ellipse (20pt and 30pt) node[]{$\mathbb{I}$};
\draw[dashed] (-0.3,2.2)--(-0.3,-0.2) node[below] {$\sigma_{\text{min}}^{(1)}$};
\draw[dashed] (1.6,2.2)--(1.6,-0.2) node[below] {$\sigma_{\text{min}}^{(2)}$};
\draw[dashed] (3.6,2.2)--(3.6,-0.2) node[below] {$\sigma_{\text{min}}^{(3)}$};
\draw[dashed] (5.6,2.2)--(5.6,-0.2) node[below] {$\sigma_{\text{min}}^{(4)}$};
\draw[dashed] (7.5,2.2)--(7.5,-0.2) node[below] {$\sigma_{\text{min}}^{(5)}$};
\end{tikzpicture}
\caption{Coarse-grained network with multiple minimal surfaces $\sigma_{\text{min}}$. If the regions of a permutation pattern are separated by minimal surfaces and the triangle equation holds straigt for the associated permutation processes, the face number is maximial.}
\label{FIGcascades}
\end{figure}

An example of a sequence of permutation symbols $(\pi_i)_{i=0}^{N-1}\subset \mathcal{S}_N$, for which the inequality (\ref{ineq2}) holds straight, is the combination of the cyclic element $\mathbb{F}_{N-i}\in \mathcal{S}_{N-i}$ permuting the first $N-i$ copies with the trivial element $\mathbb{I}\in \mathcal{S}_{i}$ for the last $i$ copies:
\begin{align}
\pi_i:=\mathbb{F}_{N-i}\otimes \mathbb{I}_{i}\in \mathcal{S}_{N}
\end{align}
Since $d(\pi_i,\pi_{j})=j-i$ holds for this sequence, we have for a collection of indices $0= i_1\leq i_2\leq ...\leq i_l= N-1$:
\begin{align}
\sum_{k=1}^l d(\pi_{i_k},\pi_{i_{k+1}})=(N-1)=d(\mathbb{F},\mathbb{I})
\end{align}
With this the triangle equations (\ref{ineq2}) would hold straight for a path $P_k$ along nodes with permutation symbols $\pi_{i_k}$. A pattern with maximal face number $\widetilde{\Omega}$, sketched in figure \ref{FIGcascades}, is thus given by from $A$ to $B$ numerated regions separated by different choices of $\sigma_{\text{min}}$, where elements of the sequence $\{\pi_{i_k}\}$ determine the local propagations. The number of such patterns depends besides the number of different minimal surfaces also on their arrangement, since this determines the connectivity of the resulting regions and thus influence the triangle equations to be staten along the disjoint paths. However, a multitude of $c$ different maximal divergent diagrams contributing to the expectation of $Z_A^{(N)}$ results in an offset of $\frac{ln[c]}{N-1}$ in the entanglement entropy (\ref{RyuTakayanagiFree2}). Since the offset is independent from the leg space dimension $D$, a multitude of different maximal pattern does not influence the asymptotic entanglement entropy in the limit of high $D$.

\raggedright
\bibliographystyle{plain}
\bibliography{./Literatur1803}

\end{document}